\numberwithin{equation}{section}
\setlist{leftmargin=*}
\declaretheoremstyle[
spaceabove=\medskipamount, spacebelow=\medskipamount,
headfont=\bfseries,
notefont=\bfseries\boldmath, notebraces={(}{)},
bodyfont=\itshape,
postheadspace=.5em
]{cursive}
\declaretheorem[style=cursive,name=Theorem,numberwithin=section]{theorem}
\declaretheorem[style=cursive,name=Lemma,numberlike=theorem]{lemma}
\declaretheorem[style=cursive,name=Proposition,numberlike=theorem]{proposition}
\declaretheoremstyle[
spaceabove=\medskipamount, spacebelow=\medskipamount,
headfont=\bfseries,
notefont=\bfseries\boldmath, notebraces={(}{)},
bodyfont=\rmfamily,
postheadspace=.5em
]{upright}
\declaretheorem[style=upright,name=Definition,qed=$\Diamond$,numberlike=theorem]
{definition}
\declaretheorem[style=upright,name=Remark,qed=$\Diamond$,numberlike=theorem]
{remark}
\newcommand{\Id}{\mathbf{1}}
\newcommand{\bk}{\mathbf{k}}
\newcommand{\C}{\mathbb{C}}
\newcommand{\R}{\mathbb{R}}
\newcommand{\Z}{\mathbb{Z}}
\newcommand{\T}{\mathbb{T}}
\newcommand{\D}{\mathbb{D}}
\newcommand{\Hi}{\mathcal{H}}
\newcommand{\A}{\mathcal{A}}
\newcommand{\F}{\mathcal{F}}
\newcommand{\BH}{\mathcal{B}(\mathcal{H})}
\newcommand{\scal}[2]{\left\langle #1, #2 \right\rangle}
\newcommand{\bra}[1]{\left\langle #1 \right|}
\newcommand{\ket}[1]{\left| #1 \right\rangle}
\newcommand{\eu}{\mathrm{e}}
\newcommand{\iu}{\mathrm{i}}
\newcommand{\di}{\mathrm{d}}
\newcommand{\FKM}{\mathrm{FKM}}
\newcommand{\BZ}{\mathrm{BZ}}
\newcommand{\EBZ}{\mathrm{EBZ}}
\newcommand{\WZ}{S\sub{WZ}}
\newcommand{\ie}{{\sl i.\,e.\ }}
\newcommand{\eg}{{\sl e.\,g.\ }}
\newcommand{\sub}[1]{_{\mathrm{#1}}}
\newcommand{\set}[1]{ \left\{  #1 \right\}}
\DeclareMathOperator{\tr}{tr}
\DeclareMathOperator{\Tr}{Tr}
\DeclareMathOperator{\Ran}{Ran}
\DeclareMathOperator{\diag}{diag}
\newcommand\footnoteref[1]{\protected@xdef\@thefnmark{\ref{#1}}\@footnotemark}
\title[Gauge-theoretic invariants for topological insulators]{Gauge-theoretic invariants for topological insulators: \\ A bridge between Berry, Wess--Zumino, and Fu--Kane--Mele}
\author{Domenico Monaco \and Cl\'ement Tauber}
\date{January 19th, 2017 -- arXiv version 2, accepted for publication in {\it Lett. Math. Phys.}} 
\begin{document}

\begin{abstract}
We establish a connection between two recently-proposed approaches to the understanding of the geometric origin of the Fu--Kane--Mele invariant $\FKM \in \Z_2$, arising in the context of $2$-dimensional time-reversal symmetric topological insulators. On the one hand, the $\Z_2$ invariant can be formulated in terms of the Berry connection and the Berry curvature of the Bloch bundle of occupied states over the Brillouin torus. On the other, using techniques from the theory of bundle gerbes it is possible to provide an expression for $\FKM$ containing the square root of the Wess--Zumino amplitude for a certain $U(N)$-valued field over the Brillouin torus. 

We link the two formulas by showing directly the equality between the above mentioned Wess--Zumino amplitude and the Berry phase, as well as between their square roots. An essential tool of independent interest is an equivariant version of the adjoint Polyakov--Wiegmann formula for fields $\T^2 \to U(N)$, of which we provide a proof employing only basic homotopy theory and circumventing the language of bundle gerbes.

\medskip

\noindent \textsc{Keywords.} Time-reversal symmetric topological insulators, Fu--Kane--Mele $\Z_2$ invariant, Wess--Zumino amplitude, Berry connection, Polyakov--Wiegmann formula.
\end{abstract}

\maketitle

\tableofcontents

\section{Introduction}

Introduced by Fu, Kane and Mele to characterize $2$-dimensional time-reversal symmetric topological insulators \cite{KaneMele05,FuKane06}, the eponym invariant $ \FKM \in \mathbb Z_2$ has now been investigated for one decade, especially in regard to its geometric interpretation. Despite the fact that Fu and Kane immediately suggested that it captures the existence ($\FKM=0$) or not ($\FKM=1$) of a set of compatible Kramer's pairs over the whole Brillouin torus $\BZ \cong \mathbb T^2$, the interpretation of FKM as an obstruction to define a global, smooth, periodic, and time-reversal symmetric Bloch frame for a general family of projectors $P(\bk)$ for $\bk \in \BZ$ (which, in applications to topological insulators, projects on the space of occupied Bloch states at fixed crystal momentum $\bk$) was mathematically established only recently in \cite{FiorenzaMonacoPanati16}. 

\begin{figure}[htb]
\centering
\pgfdeclarelayer{bg}
\pgfsetlayers{bg,main}
\begin{tikzpicture}[>=latex]
\node[draw, rectangle, inner sep=0, minimum height=4cm, minimum width=4cm] (BZ) at (0,0) {};
\draw[->] ($(BZ.west)!-0.1!(BZ.east)$) -- ($(BZ.west)!1.2!(BZ.east)$) node[right] {$k_1$};
\draw[->] ($(BZ.south)!-0.1!(BZ.north)$) -- ($(BZ.south)!1.2!(BZ.north)$) node[right] {$k_2$};
\coordinate (BZ+C) at ($(BZ.center)!1/2!(BZ.east)$);
\begin{pgfonlayer}{bg}
\node[fill=red!30, rectangle, inner sep=0, minimum height=4cm,minimum width=2cm] (BZ+) at (BZ+C) {};
\node[red] at ($(BZ+.south)!3/4!(BZ+.north)$) {$\text{EBZ}$};
\end{pgfonlayer}
\draw[very thick, red] (BZ.south) -- (BZ.north);
\draw[very thick, red] (BZ.south east) --  (BZ.north east);
\node[left,red] at ($(BZ.center)!0.4!(BZ.north)$) {$\mathbb T_0$} ;
\node[right,red] at ($(BZ.east)!0.4!(BZ.north east)$) {$\mathbb T_\pi$} ;
\node[below left] at (BZ.south) {$-\pi$};
\node[below right] at (BZ.east) {$\pi$};
\node[above left] at (BZ.north) {$\pi$};
\node[below left] at (BZ.west) {$-\pi$};
\node[fill=black,circle, inner sep=2pt] at (BZ.center) {};
\node[fill=black,circle, inner sep=2pt] at (BZ.north) {};
\node[fill=black,circle, inner sep=2pt] at (BZ.east) {};
\node[fill=black,circle, inner sep=2pt] at (BZ.north east) {};
\end{tikzpicture}
\caption{The Brillouin zone is periodic in $k_1$ and $k_2$, the effective Brillouin zone ($\EBZ$) is half of it and its boundaries are the two loops $\mathbb T_0$ and $\mathbb T_\pi$. The black dots are the four time-reversal invariant momenta, namely the inequivalent points invariant under $\bk \mapsto -\bk$. \label{fig:EBZ}}
\end{figure}
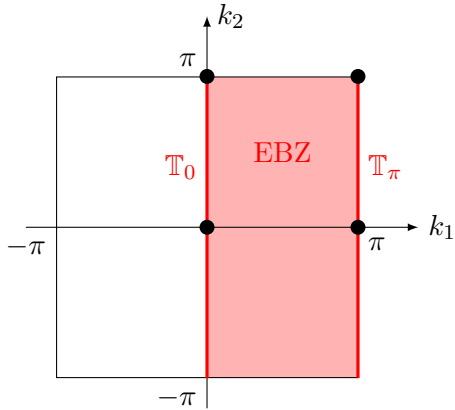

Besides, the question regarding the explicit computation of $\FKM$ for a given model has been also intensively studied. Notice how any formula for the topological invariant should take into account the symmetries of the physical system, namely periodicity and time-reversal symmetry: the former allows to focus one's attention in $\bk$-space to the Brillouin zone $\BZ$, while the latter further reduces the relevant points to consider to the \emph{effective Brillouin zone} $\EBZ$. Topologically, the Brillouin zone is a $2$-torus, while the effective Brillouin zone can be regarded a cylinder whose boundary is constituted by the two 1-dimensional tori (or loops) $\T_0$ and $\T_\pi$ (see Figure~\ref{fig:EBZ}). 

The initial definition of $\FKM$ in terms of a Pfaffian formula \cite{FuKane06}, requiring the evaluation of certain quantities only over the four time-reversal invariant momenta of the Brillouin zone (see Figure~\ref{fig:EBZ}), has the advantage of being compact and easy to compute, but avoids somehow to take into account the geometric framework behind. In particular, the Pfaffian formula is not well suited for generalizations, such as for example periodically driven or disordered systems. A very different situation occurs for systems with broken time-reversal symmetry, like quantum Hall systems (see \cite{Graf review} and references therein) and Chern insulators \cite{Haldane88, Bestwick et al 2015, Chang et al 2015}, where the topological invariant was early recognized to be the (first) Chern number. The latter can be computed through a ``local'' formula, namely integrating the Berry curvature, defined directly in terms of the family of projectors $P(\bk)$ as $\mathcal{F} = - \iu \Tr \left\{ P (\di P)^2 \right\}$, over the Brillouin torus (compare \eqref{C_1(P)}), and has moreover an interpretation again as a topological obstruction to the existence of a global, smooth, and periodic Bloch frame \cite{Nenciu91, Panati07, Monaco16}. To overcome the difficulties of the Pfaffian formula for $\FKM \in \Z_2$ and clarify its geometric origins, other approaches have been proposed to compute the Fu--Kane--Mele invariant \cite{Prodan09, Prodan11, GrafPorta13, Schulz-Baldes15, DeNittisGomi15}.

From the geometric obstruction formalism mentioned above, a different formula for $\FKM$ can be derived \cite{FuKane06,CorneanMonacoTeufel16,Monaco16}, namely
\begin{equation} \label{eq:def_delta}
\FKM = \delta, \quad \text{where}  \quad \delta = \frac{1}{2\pi} \left(\oint_{\T_\pi} \A - \oint_{\T_0} \A \right) - \frac{1}{2 \pi} \int_{\EBZ} \F \mod 2, 
\end{equation}
where, in contrast to the Chern number formula \eqref{C_1(P)}, the Berry curvature $\F$ is integrated only over the effective Brillouin zone $\EBZ$. The extra ``boundary'' terms involve the Berry connection $\A$, computed with respect to a time-reversal symmetric frame on $\T_{0/\pi}$; the fact that the expression is well-defined mod $2$ essentially accounts for the dependence of $\A$ on a gauge preserving time-reversal symmetry of Bloch frames. 

In a completely independent way, the Fu--Kane--Mele invariant was recently computed in terms of the uniquely-defined square root of a Wess--Zumino amplitude 
\cite{Lyon15,Gawedzki15}
\begin{equation}\label{eq:WZ_sr}
(-1)^{\FKM} = \sqrt{\eu^{\iu \WZ[U_P]}} \qquad \text{where} \qquad U_P(\bk) = \Id - 2 P(\bk) \in U(N).  
\end{equation}
This topological term was defined in the context of quantum field theory as a holonomy over a bundle gerbe \cite{GawedzkiReis02}, a powerful but somewhat heavy formalism that allows to implement time-reversal invariance properly on $\WZ$ so that the square root of the amplitude is well-defined and ends up coinciding with the Pfaffian formula when computed for the field $U_P$. Moreover, this framework was also used to consider 3-dimensional invariants \cite{Gawedzki15}, which as a side effect gives the effective reformulation%
\footnote{This formula is not explicitly written in \cite{Gawedzki15}, but is analogous to formula (II.53) in \cite{Gawedzki15} and its derivation is explained in words at the end of Section~II.D, page 22 there.}%
\begin{equation}\label{eq:def_K}
(-1)^{\FKM} = \mathcal{K}, \quad \text{where} \quad \mathcal{K} :=\dfrac{\sqrt{\exp\left(\iu \, \WZ[\phi_\pi] \right)}}{\sqrt{\exp\left(\iu \, \WZ[\phi_0] \right)}} \, \exp \left( \frac{\iu}{24 \pi} \int_{S^1\times\EBZ} \Tr \left\{ (\Phi^{-1} \di \Phi)^{3} \right\} \right) 
\end{equation}
where $\Phi(t,\mathbf k) := \exp \left( 2 \pi \iu t P(\mathbf k) \right) \in U(N)$ for $(t, \mathbf k) \in S^1 \times \BZ$ and $\phi_a(t,k) := \Phi(t,a,k)$ for $(t, k) \in S^1 \times \T$ and $a \in \set{0, \pi}$.  Additionally, when considering the definition of Wess--Zumino amplitude in terms of extension of fields (see Definition~\ref{def:WZ} below), this formula also gives a concrete expression for the formulation by Moore and Balents of the Fu--Kane--Mele invariant as an extension of the Chern number formula when restricted to the effective Brillouin zone \cite{MooreBalents07}.

Noticing that the square roots of the Wess--Zumino amplitudes appearing in \eqref{eq:def_K} are computed from restrictions of the family of projectors to the boundaries $\T_{0/\pi}$ of the effective Brillouin zone, and that moreover the integral over $S^1\times \EBZ$ on the right-hand side can be reduced to the one over $\EBZ$ of the Berry curvature $\mathcal F$ (see \eqref{eqn:PhiBerry} below), the formula above bears a strong similarity with a multiplicative version of the invariant $\delta$ defined in \eqref{eq:def_delta}. The aim of this paper is then to deduce expression \eqref{eq:def_K} for $\mathcal K$ directly from the expression \eqref{eq:def_delta} for $\delta$, using only differential calculs techniques and Witten's original definition of Wess--Zumino amplitudes in terms of fields extension \cite{Witten84}. We will do so in Theorem~\ref{thm:FKM}. In particular, this establishes independently the identity in \eqref{eq:def_K} and shows that the two approaches to compute $\FKM$ are equivalent by means of elementary tools from topology and differential geometry, without referring to the Pfaffian formula and circumventing the gerbe formalism. 

Our proof of \eqref{eq:def_K} is reduced to the following identity, which constitutes the main result of this paper (compare Theorem~\ref{thm:Main}, part~\ref{item:YesTRS}):
\begin{equation} 
\sqrt{\exp\left(\iu \, \WZ[\phi_a] \right)} = \sqrt{\exp \left( - \iu \oint_{\T_{a}} \A \right)}, \quad a \in \set{0, \pi}
\end{equation}
(the square root on the right-hand side is understood in the sense provided by Definition~\ref{eqn:SQRTBerryPhase} below). The above relation, together with the analogue one which holds before taking the square root when time-reversal invariance is broken (compare Theorem~\ref{thm:Main}, part~\ref{item:NoTRS}), is a result of independent interest, providing a simple interpretation of Wess--Zumino amplitudes for fields restricted to some loops in the Brillouin torus.

The paper is organized as follows. In Section \ref{sec:Definitions} we define properly all the quantities mentioned previously and needed to state the main results of the paper. Then Section \ref{sec:Homotopy} is dedicated to the proof of some homotopy invariance properties of Wess--Zumino amplitudes: as an interesting byproduct, we are able to show the validity of the (equivariant) adjoint Polyakov--Wiegmann formula, computing the (square root of the) Wess--Zumino amplitude of a product of fields of the form $g h g^{-1}$, in the specific case of $U(N)$-valued fields defined on $\Sigma = \T^2$ (Theorem~\ref{thm:APW}). This is used in Section \ref{sec:Proofs} to prove the main Theorem~\ref{thm:Main}. Finally Section \ref{sec:Conclusions} concludes with some possible generalizations and perspectives. The Appendix collects some results concerning the homotopy classes of $U(N)$-valued maps defined on the circle or on the $2$-torus, which are used several times throughout the paper.

\bigskip

\noindent \textsc{Acknowledgments.} The authors thank D.~Fiorenza and K.~Gaw\c{e}dzki for fruitful discussions and their useful comments about this work. D.M. acknowledges financial support from the German Science Foundation (DFG) within the GRK 1838 ``Spectral theory and dynamics of quantum systems''. The work of C.T. was supported by the PRIN project ``Mathematical problems in kinetic theory and applications'' (prot. 2012AZS52J).

\section{Definitions and main results} \label{sec:Definitions}

In what follows we define the different objects that have appeared in the Introduction, and state our main results. All are based on the same input describing the physical model, namely  a family $P(\bk)$, $\bk \in \R^2$, of rank-$m$ projectors in $\BH$, $\Hi = \C^N$, which is smooth and $(2 \pi \Z)^2$-periodic in $\bk$. Thus $P(\bk)$ is effectively defined for $\bk \in \T^2 := \R^2 / (2 \pi \Z)^2$: in the following, we will identify $\T = \R / 2 \pi \Z$ with the interval $[-\pi, \pi]$ with endpoints identified. At times, we may also require the family to be time-reversal symmetric with odd time-reversal symmetry, \ie
\begin{equation}\label{eq:TRI_P}
 P(-\bk) = \theta \, P(\bk) \, \theta^{-1} 
\end{equation}
for some antiunitary operator $\theta \colon \Hi \to \Hi$ such that $\theta^2 = - \Id$. The presence of such operator immediately implies that the dimension of $\Hi$ is even, $N = 2M$, because 
\begin{equation} \label{eqn:SymplecticForm}
(\varphi_1, \varphi_2) := \scal{\theta \, \varphi_1}{\varphi_2}, \quad \varphi_1, \, \varphi_2 \in \Hi
\end{equation}
defines a symplectic form over $\Hi$. Moreover, without loss of generality we may assume \cite{Hua44} that $\theta$ is of the form $\theta = J \, K$, where $J$ is the symplectic matrix 
\begin{equation} \label{J}
J = \begin{pmatrix} 0 & 1 & & & \\ -1 & 0 & & & \\ & & \ddots & & & \\ & & & 0 & 1 \\ & & & -1 & 0 \end{pmatrix}
\end{equation} 
and $K$ denotes the complex conjugation operator (with respect to some basis in $\Hi$). By a similar argument, also the rank of the projectors $P(\bk)$ must be even, $m = 2n$: simply consider the restriction of the symplectic form on $\Hi$ defined above to the invariant subspace $\Ran P(\mathbf 0)$. 

In applications to condensed matter systems, such a family of projectors comes from a periodic and time-reversal symmetric Hamiltonian, and selects the occupied Bloch states at fixed crystal momentum $\bk$ below the Fermi energy, whenever the latter sits in a spectral gap:
\[ P(\bk) = \sum_{n}^{\mathrm{occ}} \ket{\psi_{n,\bk}} \bra{\psi_{n,\bk}} \]
(see \eg \cite{MonacoPanati15}).
 
\subsection{Bloch frames, Berry connection, Berry curvature}

We collect here some definitions of geometric objects which can be constructed out of a family of projectors $P(\bk)$ as above.

\begin{definition}[Bloch frame]
A \emph{Bloch frame} for the family of projectors $P(\bk)$ is a collection of vectors $\set{e_a(\bk)}_{1 \le a \le m}$ which give an orthonormal basis of the vector space $\Ran P(\bk) \subset \Hi$; equivalently, 
\[ P(\bk) = \sum_{a=1}^{m} \ket{e_a(\bk)} \bra{e_a(\bk)}. \]

A Bloch frame is called \emph{continuous}, \emph{smooth}, or \emph{periodic} if the corresponding functions $\bk \mapsto e_a(\bk)$ are continuous, smooth, or periodic for all $a \in \set{1, \ldots, m}$. A Bloch frame is called \emph{time-reversal symmetric} if
\[ e_{2j-1}(-\bk) = - \theta e_{2j}(\bk), \quad e_{2j}(-\bk) = \theta 
e_{2j-1}(\bk), \quad 1 \le j \le n = m/2. \qedhere \]
\end{definition}

The notion of Bloch frames allows to define the Berry connection, Berry phase and Berry curvature associated to $P(\bk)$, first introduced in \cite{Berry84}.

\begin{definition}[Berry connection]
Let $\set{e_a(\bk)}_{1 \le a \le m}$ be a Bloch frame for the family of projectors $P(\bk)$. The (\emph{abelian}) \emph{Berry connection} associated to the frame is the $1$-form
\begin{equation} \label{eqn:Berry}
\A := -\iu \sum_{a=1}^{m} \scal{e_a}{\di e_a}. \qedhere
\end{equation}
\end{definition}

Notice that if $\set{e_a'(\bk)}_{1\le a \le m}$ is another Bloch frame, then the Berry connection $\A'$ associated to this frame reads%
\footnote{\label{traces} We denote by $\tr\{\cdot\}$ the trace of an $m \times m$ matrix, and by $\Tr\{\cdot\}$ the trace of an operator on $\Hi$ (\ie of an $N \times N$ matrix).}%
\[ \A' = \A - \iu \tr \left\{ u^{-1} \di u \right\} , \]
if $u(\bk)$ denotes the change-of-basis matrix (also known as the \emph{gauge}) between $\set{e_a(\bk)}$ and $\set{e_a'(\bk)}$, that is, 
\[ [u(\bk)]_{ab} := \scal{e_a(\bk)}{e_b'(\bk)}. \]
The Berry connection is thus \emph{gauge-dependent}, \ie it depends on the choice of the Bloch frame. 

Observe also that, if $\T$ is a loop in $\T^2$ (say \eg the one where one of the coordinates $(k_1, k_2)$ stays constant), and if a continuous, periodic Bloch frame on $\T$ exists, then
\[ \frac{1}{2 \pi} \oint_\T \A' = \frac{1}{2 \pi} \oint_\T \A + \frac{1}{2 \pi \iu} \oint_\T \tr \left\{ u^{-1} \di u \right\}. \]
The second summand on the right-hand side of the above equality computes the winding number of the (periodic) map $\T \ni \bk \mapsto \det u(\bk) \in U(1)$ (see \eg Lemma~\ref{T->U(N)} in the Appendix), and is thus an integer. This leads to set the following

\begin{definition}[Berry phase] \label{eqn:BerryPhase}
The gauge-independent quantity 
\[ \exp \left( -\iu \oint_\T \A \right) \quad \in U(1) \]
is called the \emph{Berry phase}.
\end{definition}

When $\set{e_a(\bk)}$ is a time-reversal symmetric Bloch frame, we may ask how the integral of $\A$ over a loop changes when choosing a different gauge which preserves this symmetry. It is easily verified that the frame $\set{e_a'(\bk)}$ will be again time-reversal symmetric if the relative gauge $u(\bk)$ satisfies
\begin{equation} \label{TRSgauge} 
u(-\bk) = J^{-1} \, \overline{u(\bk)} \, J, 
\end{equation}
where $J$ is the $m \times m$ symplectic matrix as in \eqref{J}. If now the loop $\T$ is left invariant by the involution $\bk \mapsto -\bk$ (as is the case for example for the loops $\T_{0/\pi}$ in Figure~\ref{fig:EBZ}), then the above relation implies that the winding number of $\bk \mapsto \det u(\bk)$ is an \emph{even} integer (compare part~\ref{1d-Equiv} of Lemma~\ref{T->U(N)}). This yields that $\oint_\T \A$ is actually well-defined mod $4 \pi \Z$, and we can define

\begin{definition}[Square root of the Berry phase] \label{eqn:SQRTBerryPhase}
The \emph{square root of the Berry phase} is given by
\[ \sqrt{\exp \left( -\iu \oint_\T \A \right)} := \exp \left( -\frac{\iu}{2} \oint_\T \A \right) \quad \in U(1) \]
where $\A$ is computed with respect to a smooth, periodic and time-reversal symmetric Bloch frame. This quantity is invariant under changes of gauge which preserve time-reversal symmetry.
\end{definition}

The Berry phase and its square root will play a prominent role in our main results. 

\begin{definition}[Berry curvature and Chern number]
The \emph{Berry curvature} associated to the family of projectors $P(\bk)$ is given by%
\footnote{The product of differential forms is always the wedge product. Thus $\omega^n := \omega \wedge \stackrel{(n \text{ times})}{\cdots} \wedge\, \omega$.

For the definition of the trace $\Tr \{\cdot\}$, compare the footnote on page \pageref{traces}.}%
\[ \F := - \iu \Tr \left\{ P (\di P)^2 \right\}. \]

The \emph{Chern number} of $P(\bk)$ is the integer
\begin{equation} \label{C_1(P)}
C_1(P) := \frac{1}{2\pi} \int_{\T^2} \F \quad \in \Z. \qedhere
\end{equation}
\end{definition}

Since it is expressed directly in terms of the projectors, the Berry curvature (and hence the Chern number) is a gauge-invariant quantity. Moreover, a long but straightforward computation shows that the Berry curvature $2$-form is the differential of the Berry connection $1$-form, that is, $\F = \di \A$.

\begin{remark}[Geometric interpretation: Bloch bundle] \label{BlochBundle}
To each smooth and periodic family of projectors, one can associate (via the Serre--Swan construction) a vector bundle $\mathcal{E}$ over the Brillouin $2$-torus $\T^2$, called the \emph{Bloch bundle}. Roughly speaking, $\mathcal{E}$ is determined by requiring that its fiber over the point $\bk \in \T^2$ be the $m$-dimensional vector space $\Ran P(\bk)$ -- see \cite{Panati07, MonacoPanati15} for details.

All the terminology employed in this Section has been borrowed indeed from the language of vector bundles. Bloch frames are nothing but trivializing frames for the Bloch bundle: as such, they are in general only defined locally on $\T^2$. The Berry connection is the trace of the connection $1$-form associated to the Grassmann connection, induced by the trivial connection via the obvious inclusion $\mathcal{E} \hookrightarrow \T^2 \times \Hi$. In the same way, the Berry curvature is the trace of the curvature $2$-form for the Grassmann connection. The gauge invariance of the Berry phase can be also understood from the fact that it equals the determinant of the holonomy associated to the Berry connection along the loop \cite[Prop.~4.3]{CorneanMonacoTeufel16} (see also \cite{Simon83, KarpMansouriRno00, FiorenzaSatiSchreiber15}).

The Chern number is a topological invariant associated to the Bloch bundle $\mathcal{E}$, and it characterizes its isomorphim class as a bundle over the $2$-dimensional torus \cite{Panati07}. In particular, it vanishes exactly when $\mathcal{E}$ is trivial (\ie isomorphic to the product bundle $\T^2 \times \C^m$) or, equivalently, when a smooth and periodic Bloch frame for $P(\bk)$ exists on the whole $\T^2$ \cite{Panati07}. The fact that the formula \eqref{C_1(P)} computes an integer is a non-trivial statement, which can be proved for example using obstruction theory \cite{Monaco16}. In applications to solid-state physics, the Chern number appears as a theoretical explanation for the quantization of the Hall conductance in quantum Hall systems, see \cite{Graf review} and references therein.
\end{remark}

As a last observation, notice that  the Chern number vanishes for time-reversal symmetric families of projectors \cite{Panati07, MonacoPanati15}. Indeed, the Berry curvature can be written as
\[ \F = \Omega(\bk) \, \di k_1 \wedge \di k_2, \quad \text{where} \quad \Omega(\bk) := - \iu \Tr \left\{ P(\bk) \left[ \partial_{k_1} P(\bk), \partial_{k_2} P(\bk) \right] \right\}. \]
If $P(\bk)$ satisfies \eqref{eq:TRI_P}, then $\Omega(\bk) = - \Omega(-\bk)$ is odd, and hence integrates to zero over $\T^2$.

\subsection{Wess--Zumino amplitudes and their square roots}

We abandon momentarily the language of families of projectors to discuss the second main character in the results of the present paper, namely the \emph{Wess--Zumino amplitude}. Here we consider Witten's original definition of the Wess--Zumino action \cite{Witten84}, namely 

\begin{definition}[Wess--Zumino action] \label{def:WZ}
A \emph{field} is a smooth map $g : \Sigma \rightarrow G$, where $\Sigma$ a $2$-dimensional compact and closed surface and $G$ is a (compact, matrix) Lie group. An \emph{extension} of a field is a map $\widetilde{g} \colon \widetilde{\Sigma} \to G$ where $\widetilde{\Sigma}$ is a $3$-dimensional manifold with boundary  $\partial \widetilde \Sigma = \Sigma$ such that the restriction $\widetilde g |_{\partial \widetilde \Sigma} = g$. 

The \emph{Wess--Zumino action} of a field $g$ is then defined by
\begin{equation} \label{chi}
\WZ[g] := \int_{\widetilde \Sigma} {\widetilde g}^* \chi, \qquad \text{where} \qquad \widetilde{g}^* \chi = \frac{1}{12\pi} \Tr \left\{ (\widetilde{g}^{-1} \di \widetilde{g})^{3} \right\}.
\end{equation}
Thus $\widetilde{g}^* \chi$ is the pullback via $\widetilde{g}$ of the $3$-form $\chi$ on the group $G$.
\end{definition}

Let us point out two obvious problems that this definition of $\WZ[g]$ poses:
\begin{enumerate}
 \item it requires the existence of an extension for the given field $g$;
 \item the value it computes could \textit{a priori} depend on the choice of the extension.
\end{enumerate}

When $G$ is simply connected (like for example $G= SU(N)$), the existence of an extension for any map $g \colon \Sigma \to G$ to a $3$-manifold with boundary $\Sigma$ is guaranteed by the fact that $G$ is $2$-connected, as $\pi_2(G) = 0$ for any compact Lie group. However, in the applications we have in mind, the surface $\Sigma$ will always be a 2-dimensional torus $\T^2$, and the Lie group will always be the non-simply-connected unitary group $U(N)$. In this case, the existence of an extension is guaranteed if the winding number of $g$ along one direction of $\Sigma = \T\times \T$ vanishes, as in this case the field can be extended to the solid torus, where the corresponding circle $\T$ is ``filled'' to a unit disk%
\footnote{Indeed, the vanishing of the winding number along $\T$ implies that the restriction of the field to $\T$ is homotopic to a constant map (Lemma~\ref{T->U(N)}); besides, an homotopy $F \colon \T \times [0,1] \to U(N)$ from a constant map $f_0$ to a map $f_1$ provides an extension of $f_1$ to the unit disk by setting $f(r \, \eu^{\iu k}) := f_r(k)$, $(k, r) \in \T \times [0,1]$.}%
. This will always be the case for the fields considered below. Besides, if $g$ has non-vanishing winding numbers in both directions, the previous definition is still meaningful using the invariance of the Wess--Zumino action under diffeomorphisms of $\Sigma$ \cite{Gawedzki88,GawedzkiReis02}, since a reparametrization of the torus always allows to ``unwind'' the field in one direction. 

As for the second problem with the definition of $\WZ$, one notices that when $G=U(N)$ the Wess--Zumino action is properly normalized so that for two different extensions $(\widetilde \Sigma_1,\widetilde g_1)$ and $(\widetilde \Sigma_2,\widetilde g_2)$ the difference between the two computations%
\footnote{This fact depends on the proper normalization of $\chi$ as the unique generator of the third cohomology group $H^3(U(N);\Z)$, see \cite{ChevalleyEilenberg, BottSeeley78, Gawedzki99}. For more general Lie groups the Wess--Zumino action is not ambiguous only when $\chi$ is multiplied by specific levels $k \in \mathbb Z$.} 
\[
\WZ^{(1)}[g] - \WZ^{(2)}[g] \in 2 \pi \mathbb Z
\]
so that the corresponding \emph{Wess--Zumino amplitude} 
\[ \exp(\iu \WZ[g] ) \quad \in U(1) \]
is well-defined. 

As a simple example of Wess--Zumino computation, and to see a first relation with the previous Subsection, we provide the following

\begin{proposition} For a smooth family of projectors $P(\bk)$ with $\bk \in \mathbb T^2$, consider the field $U_P(\bk) := \Id - 2 P(\bk) \in U(N)$. Then
\begin{equation}\label{eq:WZ_UP}
\eu^{\iu \WZ[U_P]} = (-1)^{C_1(P)}
\end{equation}
where $C_1(P)$ is the Chern number of $P(\bk)$ defined in \eqref{C_1(P)}.
\end{proposition}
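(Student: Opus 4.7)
The plan is to produce an explicit extension of $U_P$ and evaluate the resulting Wess--Zumino integral by direct manipulation of the projector identity $P^2 = P$. I would use the natural homotopy
\[
\Phi(t, \bk) := \exp \left( 2 \pi \iu t \, P(\bk) \right) = \Id + \left( \eu^{2 \pi \iu t} - 1 \right) P(\bk),
\]
which satisfies $\Phi(0, \bk) = \Id$ and $\Phi(\tfrac{1}{2}, \bk) = \Id - 2 P(\bk) = U_P(\bk)$, and set $\widetilde{\Sigma} := (\T^2 \times [0, \tfrac{1}{2}]) \cup_{\T^2 \times \set{0}} W$, where $W$ is any compact $3$-manifold filling $\T^2$ (\eg the solid torus $D^2 \times S^1$). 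Extending $U_P$ to $\widetilde{\Sigma}$ by $\Phi$ on the cylinder and by the constant matrix $\Id$ on $W$ gives a valid extension whose $W$-contribution to the Wess--Zumino integral vanishes trivially, so the problem reduces to computing $\int_{\T^2 \times [0, 1/2]} \Phi^{*} \chi$.

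I would then carry out that computation fully explicitly. Setting $a := \eu^{2 \pi \iu t} - 1$ and $b := \eu^{-2 \pi \iu t} - 1$, so that $\Phi^{-1} = \Id + b P$ (equivalently, $a + b + a b = 0$), a direct expansion exploiting $P^2 = P$ yields
\[
\Phi^{-1} \di \Phi = 2 \pi \iu \, \di t \, P + a \, \di P + a b \, P \, \di P =: X + Y.
\]
Because $X \wedge X = 0$ (from $\di t \wedge \di t = 0$) and $Y^{3}$ is a $3$-form on the $2$-dimensional torus $\T^2$ and hence vanishes, cyclicity of the trace for matrix-valued $1$-forms reduces $\Tr\{(X + Y)^{3}\}$ to $3 \, \Tr\{X \, Y^{2}\}$. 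The identity $P (\di P) P = 0$, obtained by differentiating $P^{2} = P$, together with its consequences $(P \, \di P)^{2} = 0$ and $(\di P) P \, (\di P) = (\di P)^{2} - P (\di P)^{2}$, collapses $Y^{2} = a^{2} (1 + b) (\di P)^{2}$. Using $a^{2} (1 + b) = 2 \cos(2 \pi t) - 2$ and $\F = - \iu \Tr\{P (\di P)^{2}\}$, a short simplification gives
\[
\Phi^{*} \chi = \tfrac{1}{12 \pi} \Tr\{(\Phi^{-1} \di \Phi)^{3}\} = (1 - \cos 2 \pi t) \, \di t \wedge \F.
\]

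The conclusion would then follow from two elementary integrals: $\int_{0}^{1/2} (1 - \cos 2 \pi t) \, \di t = \tfrac{1}{2}$ and $\int_{\T^{2}} \F = 2 \pi \, C_1(P)$ give $\WZ[U_P] = \pi \, C_1(P)$ modulo $2 \pi \Z$, so that $\eu^{\iu \WZ[U_P]} = \eu^{\iu \pi C_1(P)} = (-1)^{C_1(P)}$. The hard part will be the matrix-form algebra producing $Y^{2} = a^{2} (1 + b) (\di P)^{2}$, where the projector identity $P (\di P) P = 0$ must be carefully combined with the graded commutation rules for matrix-valued forms and the cyclicity of the trace; the remainder is routine.
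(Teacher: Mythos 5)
Your proof is correct and takes essentially the same route as the paper: both use the exponential homotopy as an explicit extension over a cylinder (yours is $\Phi(t,\bk)=\exp(2\pi\iu t\,P(\bk))$ on $[0,\tfrac12]$, the paper's is $\eu^{\iu\pi t P(\bk)}$ on $[0,1]$, a trivial reparametrization) and the same projector identities ($P\,\di P\,P=0$, etc.) to collapse the integrand to a multiple of $\di t \wedge \Tr\{P(\di P)^2\}$, giving $\WZ[U_P]=\pi\,C_1(P)$ and hence the claim. Your explicit gluing of a solid torus at $t=0$, where the field is constantly $\Id$, is a slightly more careful handling of the extra boundary component than the paper's remark that $\{0\}\times\T^2$ ``gives a trivial contribution,'' but this is a refinement of bookkeeping, not a different argument.
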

\begin{proof}
Consider the following extension: $\widetilde \Sigma := [0,1]\times \mathbb T^2$ and
\[ \widetilde U_P(t,\bk) := \eu^{\iu \pi t P(\bk)}  = \eu^{\iu \pi t}P(\bk) + \Id - P(\bk)\]
where the last equality comes from the spectral decomposition of $P(\bk)$. Moreover $\widetilde U_P(1,\bk) = U_P(\bk)$ and $\widetilde U_P(0,\bk) = \Id$, so that the boundary $\{0\} \times \mathbb T^2$ gives a trivial contribution in the computation of the Wess--Zumino action. Thus a direct computation shows 
\[\widetilde U_P^{-1} \di \widetilde U_P = \iu \pi \di t P + (\eu^{\iu \pi t}-1) \di P + 2 (1-\cos(\pi t )) P \di P \]
where we have used that $(\Id-P)P=0$ since $P$ is a projector. Hence, after some algebra
\begin{equation} \label{eqn:UP}
\Tr\big((\widetilde U_P^{-1} \di \widetilde U_P)^3\big) =  6 \iu \pi (\cos(\pi t)-1) \di t  \Tr\left\{ P (\di P)^2 \right\}
\end{equation}
where we have used that $P(\di P) P=0$, so that when integrating over $t$ we get
\[ \WZ[U_P] = -\frac{\iu}{2}\int_{\mathbb T^2} \Tr\left\{ P (\di P)^2 \right\} 
=\pi\, C_1(P) \] 
which concludes the proof by taking the corresponding amplitude.
\end{proof} 

The analogue of the above statement in the context of a time-reversal symmetric topological insulators is \eqref{eq:WZ_sr}. As was already noticed, when \eqref{eq:TRI_P} is satisfied the Berry curvature is odd and the Chern number vanishes, so that the Wess--Zumino amplitude in \eqref{eq:WZ_UP} is always 1. This shows in particular that its square root, which appears in \eqref{eq:WZ_sr}, belongs to $\set{\pm 1} = \mathbb Z_2$. However the explicit computation of the square root for the Wess--Zumino amplitude of any time-reversal invariant map requires the technology of (Hermitian line) bundle gerbe with unitary connection, and was already deeply investigated in \cite{Lyon15,Gawedzki15}.

Here we would like to circumvent this approach and define the square root of Wess--Zumino amplitudes \`a-la-Witten, that is, via field extensions, at least for $U(N)$-valued fields. To this end, recall that by a time-reversal symmetry we mean an antiunitary operator $\theta \colon \Hi \to \Hi$ on the Hilbert space $\Hi = \C^{2M}$ such that $\theta^2 = - \Id$. The induced adjoint action of the time-reversal symmetry operator on the unitaries over $\Hi$ will be denoted by
\begin{equation} \label{Theta}
\Theta(g) := \theta \, g \, \theta^{-1}, \quad g \in U(2M).
\end{equation}

\begin{definition}[Equivariant fields and extensions] \label{def:EquivField}
A field $g \colon \Sigma \to U(2M)$ on the $2d$ compact surface $\Sigma$ will be called \emph{$\Z_2$-equivariant} (or simply \emph{equivariant}) if there exists an involution $\vartheta \colon \Sigma \to \Sigma$ such that
\[ \Theta \circ g = g \circ \vartheta. \]

By an \emph{equivariant extension} of the field $g \colon \Sigma \to U(2M)$ we mean a map $\widetilde{g} \colon \widetilde{\Sigma} \to U(2M)$ from a $3d$ manifold $\widetilde{\Sigma}$ with boundary $\partial \widetilde{\Sigma} = \Sigma$ such that
\begin{itemize}
 \item $\widetilde{g} \big|_{\partial \widetilde{\Sigma}} \equiv g$, and
 \item there exists an involution $\widetilde{\vartheta}$ on $\widetilde{\Sigma}$ such that $\widetilde{\vartheta} \big|_{\partial \widetilde{\Sigma}} \equiv \vartheta$ and $\Theta \circ \widetilde{g} = \widetilde{g} \circ \widetilde{\vartheta}$. \qedhere
\end{itemize}
\end{definition}

\medskip

For equivariant fields, a finer notion than the Wess--Zumino amplitude can be introduced.

\begin{definition}[Square root of the Wess--Zumino amplitude]
For an equivariant field $g \colon \Sigma \to U(2M)$, the \emph{square root of the Wess--Zumino amplitude} is defined as 
\[ \sqrt{\exp(\iu \, \WZ[g])} := \exp \left( \frac{\iu}{2} \int_{\widetilde{\Sigma}} \widetilde{g}^* \chi \right) \]
where $\widetilde{g}$ is an equivariant extension of $g$ and $\chi$ is the $3$-form over $U(2M)$ appearing in \eqref{chi}. 
\end{definition}

Notice that equivariance of the extension makes the quantity $\int_{\widetilde{\Sigma}} \widetilde{g}^* \chi$ well defined $\bmod \: 4 \pi \Z$ rather than $\bmod \: 2 \pi \Z$, see \cite{Lyon15} and \cite[Prop.~1]{Gawedzki15}. The latter references show also that there exists indeed an equivariant extension for any equivariant field $g \colon \T^2 \to U(2M)$.

\subsection{Main results}

We are finally able to state the main results. These concern the evaluation of Wess--Zumino amplitudes for very specific fields that are defined starting from a smooth, periodic, and possibly time-reversal symmetric family of projectors $P(\bk)$, $\bk \in \R^2$. These fields appeared in \eqref{eq:def_K} and read
\begin{equation} \label{phia}
\phi_a(t,k) := \exp( 2 \pi \iu t P(a,k)), \qquad (t,k) \in S^1 \times \mathbb T_a, \qquad \text{for } a \in \{0,\pi\} 
\end{equation}
where $S^1 = \R / \Z$ and $\T_{0/\pi}$ are the boundaries of the effective Brillouin zone $\EBZ$ (see Figure~\ref{fig:EBZ}). The Wess--Zumino amplitude of $\phi_a$ will be expressed in terms of the Berry phase of the projector along the loop $\T_a$, both in presence and in absence of time-reversal symmetry.

\begin{theorem} \label{thm:Main}
Let $P(\bk)$, $\bk \in \R^2$, be a smooth and periodic family of projectors on $\Hi \simeq \C^N$. Define $\phi_a \colon S^1 \times \T_a \to U(N)$ as in \eqref{phia} for $a \in \set{0,\pi}$.
\begin{enumerate}
 \item \label{item:NoTRS} The Wess--Zumino amplitude of the field $\phi_a$ equals the Berry phase of the projectors along $\T_a$, \ie
 \[ \exp \left( \iu \WZ[\phi_a] \right) = \exp \left( - \iu \oint_{\T_a} \A \right). \]
 \item \label{item:YesTRS} If moreover the family of projectors is time-reversal symmetric, then the square root of the Wess--Zumino amplitude of the field $\phi_a$ equals the square root of the Berry phase of the projectors along $\T_a$, \ie
 \[ \sqrt{\exp \left( \iu \WZ[\phi_a] \right)} = \sqrt{\exp \left( - \iu \oint_{\T_a} \A \right)}. \]
\end{enumerate}
\end{theorem}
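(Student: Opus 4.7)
The plan is to compute $\WZ[\phi_a]$ by constructing an explicit $3$-dimensional extension, adapting the calculation that proves formula~\eqref{eq:WZ_UP}. The surface $S^1 \times \T_a$ bounds the solid torus $\widetilde{\Sigma} := S^1 \times D^2$, where $D^2$ is a disk with $\partial D^2 = \T_a$. Since the Grassmannian $\mathrm{Gr}_m(\C^N)$ of rank-$m$ projectors is simply connected, the loop $k \mapsto P(a,k)$ extends to a smooth map $\widetilde{P} \colon D^2 \to \mathrm{Gr}_m(\C^N)$, and $\widetilde{\phi}_a(t,\bk) := \exp(2 \pi \iu \, t \, \widetilde{P}(\bk))$ provides an extension of $\phi_a$ to $\widetilde{\Sigma}$.

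The computation of $\widetilde{\phi}_a^* \chi$ now mirrors the one in the proof of~\eqref{eq:WZ_UP} with $\pi t$ replaced by $2 \pi t$. Using $\widetilde{P}(\di \widetilde{P}) \widetilde{P} = 0$ and cyclicity of the trace, the cube $(\widetilde{\phi}_a^{-1} \di \widetilde{\phi}_a)^3$ collapses to a multiple of $(\cos(2\pi t) - 1) \, \di t \wedge \tr\{\widetilde{P}(\di\widetilde{P})^2\}$. Integrating in $t$ via $\int_0^1 (\cos(2\pi t) - 1) \, \di t = -1$, converting $\tr\{\widetilde{P}(\di\widetilde{P})^2\} = \iu \, \widetilde{\F}$, and keeping track of the boundary-orientation convention of the solid torus, I expect $\WZ[\phi_a] \equiv -\int_{D^2} \widetilde{\F} \pmod{2\pi}$. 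Since $D^2$ is contractible a smooth global Bloch frame for $\widetilde{P}$ exists, providing a Berry connection $\widetilde{\A}$ with $\widetilde{\F} = \di \widetilde{\A}$; Stokes' theorem then yields $\int_{D^2} \widetilde{\F} = \oint_{\partial D^2} \widetilde{\A}$, equal to $\oint_{\T_a} \A$ modulo $2\pi$ by gauge-invariance of the Berry phase. This establishes part~\ref{item:NoTRS}.

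For part~\ref{item:YesTRS}, the same construction can be made equivariant. Under~\eqref{eq:TRI_P} one checks directly that $\phi_a$ is equivariant under $(t,k) \mapsto (-t,-k)$ on $S^1 \times \T_a$ (using $-a \equiv a \pmod{2\pi}$ for $a \in \set{0,\pi}$); extending the involution to $\widetilde{\Sigma}$ by $(t,\bk) \mapsto (-t,-\bk)$ (taking $D^2 \subset \R^2$ centered at the origin), one can choose $\widetilde{P}$ satisfying $\widetilde{P}(-\bk) = \theta \, \widetilde{P}(\bk) \, \theta^{-1}$ by extending first on a fundamental domain of the involution and then propagating by symmetry, the existence of such an extension being guaranteed by the cited results of~\cite{Lyon15,Gawedzki15}. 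Equivariance upgrades the well-definedness of $\WZ[\phi_a]$ from mod $2\pi$ to mod $4\pi$; halving the identity from part~\ref{item:NoTRS} then gives $\frac{1}{2}\WZ[\phi_a] \equiv -\frac{1}{2} \oint_{\T_a} \A \pmod{2\pi}$, the mod-$4\pi$ ambiguity of $\oint_{\T_a} \A$ recorded in Definition~\ref{eqn:SQRTBerryPhase} matching the equivariant WZ ambiguity exactly, provided $\widetilde{\A}$ is computed in a Bloch frame on $D^2$ whose restriction to $\T_a$ is TR-symmetric.

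The principal obstacle lies in part~\ref{item:YesTRS}: besides invoking the equivariant projector extension, one must select a global Bloch frame on $D^2$ whose restriction to $\T_a$ is TR-symmetric, so that the Stokes identification returns precisely the Berry connection entering Definition~\ref{eqn:SQRTBerryPhase}. Everything else reduces to careful orientation and sign tracking that parallels the proof of~\eqref{eq:WZ_UP}.
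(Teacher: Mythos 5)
Your part~\ref{item:NoTRS} is correct, and it follows a genuinely different route from the paper's. The paper never builds a $3$-dimensional extension of $\phi_a$: it factorizes $\phi_a(t,k) = W(k)\,\psi(t)\,W(k)^*$ as in \eqref{eqn:conjugate}, using the parallel-transport trivialization \eqref{eqn:trivial} of the Bloch bundle over the circle, notes that $\psi$ extends trivially to the solid torus so that $\WZ[\psi]=0$, and then applies the adjoint Polyakov--Wiegmann formula of Theorem~\ref{thm:APW} (whose proof occupies Section~\ref{sec:Homotopy}) to reduce $\WZ[\phi_a]$ to a one-dimensional integral which is manifestly $-\oint_{\T_a}\A$. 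Your alternative --- extending the projector loop over the disk via simple connectedness of the Grassmannian, repeating the algebra already used to prove \eqref{eq:WZ_UP}, and applying Stokes' theorem --- is more elementary and bypasses the Polyakov--Wiegmann machinery entirely for this half of the statement; the sign does come out as you claim once orientations are fixed consistently, and gauge invariance of the Berry phase indeed absorbs the choice of disk frame modulo $2\pi$.

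Part~\ref{item:YesTRS} is where your proposal has genuine gaps. First, a concrete error: the involution $\bk\mapsto-\bk$ on a disk centered at the origin restricts on $\partial D^2$ to the antipodal map, which is fixed-point free, while the involution $k\mapsto-k$ on $\T_a$ has the two fixed points $k=0$ and $k=\pi$; since intertwined involutions have homeomorphic fixed-point sets, no identification $\T_a\cong\partial D^2$ can carry one to the other. The involution you must extend to the disk is the reflection $z\mapsto\bar z$ (embedding $\T_a\ni k\mapsto \eu^{\iu k}$), whose fixed set is a diameter. This changes the nature of the equivariant extension problem: along the fixed diameter, $\widetilde P$ must take values in the $\theta$-invariant projectors, so one has to join $P(a,0)$ to $P(a,\pi)$ inside the quaternionic Grassmannian (which is connected), extend over a half-disk using simple connectedness of the full Grassmannian, and then reflect. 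Citing \cite{Lyon15,Gawedzki15} does not fill this in: those results produce \emph{some} equivariant extension of the field $\phi_a$ in the sense of Definition~\ref{def:EquivField}, not an extension of the product form $\exp(2\pi\iu t\,\widetilde P(\bk))$, which is what your explicit computation of the $3$-form requires.

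Second, and more fundamentally, the condition you defer --- ``provided $\widetilde\A$ is computed in a Bloch frame on $D^2$ whose restriction to $\T_a$ is TR-symmetric'' --- is not a technicality to be tracked: it is the entire content of the equivariant statement. An arbitrary frame on $D^2$ restricts on $\T_a$ to a frame whose gauge relative to a time-reversal symmetric frame has some winding $w\in\Z$, and Stokes gives $\oint_{\partial D^2}\widetilde\A = \oint_{\T_a}\A + 2\pi w$; hence your candidate square root differs from the one of Definition~\ref{eqn:SQRTBerryPhase} by the factor $(-1)^w$, which is exactly the $\Z_2$ ambiguity the theorem is meant to resolve. You must prove that $w$ is even, e.g.\ by constructing a frame on $D^2$ with TR-symmetric boundary values: choose a symplectic frame of $\widetilde P$ along the fixed diameter, extend it over the upper half-disk, and propagate it to the lower half-disk by $\theta$ --- a construction that uses the corrected, reflection-equivariant $\widetilde P$ in an essential way. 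Compare with the paper, where this issue never arises because the trivialization $W(k)$ is chosen time-reversal symmetric and $\set{e_a(0)}$ symplectic, so that $e_a(k)=W(k)\,e_a(0)$ is automatically a TR-symmetric frame, while the mod-$4\pi$ bookkeeping is delegated to the equivariant adjoint Polyakov--Wiegmann formula \eqref{eqn:EAPW}.
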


Notice that part of the statement entails the well-posedness of the (square root of the) Berry phase, that is, of the existence of a smooth, periodic (and time-reversal symmetric) Bloch frame for $P(\bk)$ along $\T_a$. Both statements in Theorem~\ref{thm:Main} can be seen as incarnations of a ``dimensional reduction'', where an intrinsically $2$-dimensional object like the Wess--Zumino amplitude of the specific field $\phi_a$ can be computed by an integration over a $1$-dimensional loop, rather than by a $3$-dimensional extension. The proof of Theorem~\ref{thm:Main} can be found in Section~\ref{sec:Proofs} (compare Theorem~\ref{thm:WZ=Berry}).

As an application of the above result to the context of topological insulators, we are able to show directly the equality between the two formulations \eqref{eq:def_delta} and \eqref{eq:def_K} for the Fu--Kane--Mele invariant $\FKM \in \Z_2$. The following statement can be seen as an alternative proof of \eqref{eq:def_K}, which avoids using the techniques of bundle gerbes adopted in \cite{Gawedzki15}.

\begin{theorem} \label{thm:FKM}
For a smooth, periodic, and time-reversal symmetric family of projectors $P(\bk)$, $\bk \in \R^2$, let $\delta$ be defined as in \eqref{eq:def_delta} and $\mathcal{K}$ be defined as in \eqref{eq:def_K}. Then
\[ \mathcal{K} = (-1)^\delta. \]
\end{theorem}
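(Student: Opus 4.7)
The plan is to unpack the three factors defining $\mathcal{K}$ one at a time and recognize in the combined exponent a representative of $\iu\pi\delta$ modulo $2\pi \iu \Z$. Two ingredients drive the proof: Theorem~\ref{thm:Main}\ref{item:YesTRS}, which handles the two Wess--Zumino square roots, and a direct differential-forms calculation that turns the cubic $\Tr\{(\Phi^{-1}\di\Phi)^3\}$ into the Berry curvature on $\EBZ$.

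First I would apply Theorem~\ref{thm:Main}\ref{item:YesTRS} to both $a=0$ and $a=\pi$, giving
\[ \frac{\sqrt{\exp(\iu\WZ[\phi_\pi])}}{\sqrt{\exp(\iu\WZ[\phi_0])}} = \exp\!\left(-\frac{\iu}{2}\oint_{\T_\pi}\A + \frac{\iu}{2}\oint_{\T_0}\A\right), \]
with the square roots on the right understood as in Definition~\ref{eqn:SQRTBerryPhase}. Next, I would compute $\Phi^{-1}\di\Phi$ using $P^2=P$ to get $\Phi = \Id + (\eu^{2\pi\iu t}-1)P$, which gives after a short calculation
\[ \Phi^{-1}\di\Phi = 2\pi\iu\, P\,\di t \;+\; (\eu^{2\pi\iu t}-1)\,\di P \;+\; 2(1-\cos 2\pi t)\, P\,\di P. \]
Writing this as $A+B$ with $A=2\pi\iu P\,\di t$, and using $\di t\wedge \di t=0$ together with $P\di P\, P=0$ and $\di P\cdot P\di P + P\di P\cdot \di P = (\di P)^2$ (so that in particular $\Tr\{B^3\}$ vanishes because $(\di P)^3$ has three $\di k_i$'s on a $2$-torus), the cubic collapses to $3\Tr\{AB^2\}$ and one finds the scalar identity
\[ \Tr\{(\Phi^{-1}\di\Phi)^3\} = -12\pi\iu\,(1-\cos 2\pi t)\,\di t\wedge \Tr\{P(\di P)^2\}. \]
Since $\int_0^1(1-\cos 2\pi t)\,\di t = 1$ and $\Tr\{P(\di P)^2\} = \iu\,\F$, Fubini gives
\[ \int_{S^1\times \EBZ} \Tr\{(\Phi^{-1}\di\Phi)^3\} = 12\pi\int_{\EBZ}\F, \]
so the exponential factor in $\mathcal{K}$ equals $\exp\bigl(\tfrac{\iu}{2}\int_{\EBZ}\F\bigr)$; this is presumably the content of the identity \eqref{eqn:PhiBerry} referenced earlier in the paper.

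Combining the three pieces,
\[ \mathcal{K} = \exp\!\left(-\frac{\iu}{2}\oint_{\T_\pi}\A + \frac{\iu}{2}\oint_{\T_0}\A + \frac{\iu}{2}\int_{\EBZ}\F\right), \]
whereas a chosen representative of $(-1)^\delta = \exp(\iu\pi\delta)$ reads
\[ \exp\!\left(\frac{\iu}{2}\oint_{\T_\pi}\A - \frac{\iu}{2}\oint_{\T_0}\A - \frac{\iu}{2}\int_{\EBZ}\F\right). \]
These two phases are complex conjugates of each other, hence multiplicative inverses in $U(1)$; since $(-1)^\delta\in\{\pm 1\}$ is its own inverse, they coincide and the theorem follows. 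The only real obstacle is bookkeeping: one must verify that the square roots produced by Theorem~\ref{thm:Main}\ref{item:YesTRS} indeed combine consistently (in particular, that an even-integer ambiguity in $\tfrac{1}{2\pi}\bigl(\oint_{\T_\pi}\A-\oint_{\T_0}\A\bigr)$ does not spoil the agreement modulo~$2$), which is guaranteed by Definition~\ref{eqn:SQRTBerryPhase} and the time-reversal constraint \eqref{TRSgauge} on admissible gauges.
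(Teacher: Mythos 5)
Your proposal is correct and follows essentially the same route as the paper's proof: both invoke Theorem~\ref{thm:Main}, part~\ref{item:YesTRS}, for the ratio of Wess--Zumino square roots, reduce the cubic term via the identity \eqref{eqn:PhiBerry} (the paper by recycling the computation \eqref{eqn:UP} with $t \mapsto 2t$, you by redoing the short algebra directly), and then match exponents against a real representative of $\delta$. The only cosmetic difference is that the paper writes $(-1)^\delta = (\eu^{-\iu\pi})^\delta$ from the outset, whereas you take $\eu^{\iu\pi\delta}$ and close the sign gap with the complex-conjugate/$\pm 1$ observation, which is an equivalent way of handling the same bookkeeping.
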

\begin{proof}
Since the expression computing $\delta$ in \eqref{eq:def_delta} is a well-defined integer mod $2$, we can compute
\begin{align*}
(-1)^\delta & = (\eu^{-\iu \pi})^{\frac{1}{2\pi} \left(\oint_{\T_\pi} \A - \oint_{\T_0} \A \right) - \frac{1}{2 \pi} \int_{\EBZ} \F} = \dfrac{\exp \left( -\frac{\iu}{2} \oint_{\T_\pi} \A \right)}{\exp -\left( \frac{\iu}{2} \oint_{\T_0} \A \right)} \, {\exp \left( \frac{\iu}{2} \int_{\EBZ} \F \right)} \\
& = \frac{\sqrt{\exp \left( - \iu \oint_{\T_\pi} \A \right)}}{\sqrt{\exp \left( - \iu \oint_{\T_0} \A \right)}} \, \exp \left( \frac{1}{2} \int_{\EBZ} \Tr \left\{ P (\di P)^2 \right\} \right).
\end{align*}
We compare the above expression with formula \eqref{eq:def_K} for $\mathcal{K}$. Theorem~\ref{thm:Main} gives the equality between the ratio on the right-hand side of the above equation and the one appearing in \eqref{eq:def_K}. The two exponential terms can instead be compared by noticing that 
\begin{equation} \label{eqn:PhiBerry}
\int_{[0,1]\times \EBZ}\Tr \left\{ (\Phi^{-1} \di \Phi)^{3} \right\} = - 12 \pi \iu  \int_{\EBZ} \Tr \left\{ P (\di P)^{2} \right\}
\end{equation}
for $\Phi(t,\bk) := \exp(2\pi \iu t P(\bk))$, $(t,\bk) \in [0,1]\times \R^2$. The above identity follows from an algebraic computation similar to one performed in the proof of Proposition \ref{eq:WZ_UP} (compare $\widetilde U_P$ and $\Phi$): replacing $t$ by $2t$ in \eqref{eqn:UP} and performing the integration over $t \in [0,1]$ leads to the result. This concludes the proof of the Theorem.
\end{proof}

\subsection{Factorization of fields} \label{sec:Factor}

To conclude this Section, we make some remarks on the stament of our main Theorem~\ref{thm:Main}, that will serve also as a motivation for the study of the Polyakov--Wiegmann formula in the next Section.

To streamline the notation, we consider a smooth, periodic (and possibly time-reversal symmetric) family of projectors $P(k)$, $k \in \T \equiv \R / 2 \pi \Z$, and set
\begin{equation} \label{eq:phi}
\phi(t, k) := \exp\left( 2 \pi \iu \, t \,P(k) \right), \quad (t,k) \in S^1 \times \T \simeq \T^2.
\end{equation}
When $P(k)$ is the restriction of a $2$-dimensional family of projectors to the two boundaries $\T_{0}$ and $\T_{\pi}$ of the effective Brillouin zone, we end up in the setting of the statement of Theorem~\ref{thm:Main}.

It is well known that out of the family of projectors $P(k)$ one can construct a rank-$m$ Hermitian vector bundle $\mathcal{E}$ over the circle $\T$, called the \emph{Bloch bundle} (compare Remark~\ref{BlochBundle}). As every vector bundle over the circle, it is isomorphic to the trivial bundle given by the Cartesian product $\T \times \C^m$. At the level of projectors, this implies the existence of a smooth, periodic family of unitary operators $W(k) \in U(N)$ such that
\begin{equation} \label{eqn:trivial}
P(k) = W(k) \, P(0) \, W(k)^*, \quad k \in \T.
\end{equation}
Moreover, $W(k)$ can be normalized so that $W(0) = \Id$ without loss of generality. When $P(k)$ is time-reversal symmetric, then $W(k)$ can be chosen to be time-reversal symmetric as well, meaning that $W(-k) = \theta \, W(k) \, \theta^{-1}$.

\begin{remark}[Parallel transport] \label{rmk:parallel}
Explicitly, such family of unitary operators can be constructed as follows \cite{CorneanMonacoTeufel16}. Define the \emph{parallel transport} unitary $T(k)$ as the solution to the operator-valued Cauchy problem
\[ \begin{cases}
\iu \, \partial_k T(k) = G(k) \, T(k), & G(k) := \iu [\partial_k P(k), P(k)] = 
G(k)^*, \\
T(0) = \Id. &
\end{cases} \]
The family of operators $T(k)$ is smooth (and possibly time-reversal symmetric), and satisfies the intertwining property
\[ P(k) = T(k) \, P(0) \, T(k)^*. \]
However, $T(k)$ is in general not periodic in $k$. Write $T(2\pi) = \eu^{2 \pi \iu M}$, $M=M^*$, via spectral decomposition. Then the family of operators
\[ W(k) := T(k) \eu^{- \iu k M} \]
satisfies all the required properties.
\end{remark}

The relation \eqref{eqn:trivial} yields  at once that
\begin{equation} \label{eqn:conjugate}
\phi(t,k) = W(k) \, \psi(t) \, W(k)^*, \quad \psi(t) := \exp(2 \pi \iu t P(0)).
\end{equation}
Thus, in order to compute the Wess--Zumino amplitude%
\footnote{Note that, since $\det \phi(k,t) = \det \psi(t)$ for all $k \in \T$, the map $\phi$ does not wind along the $k$-direction. Thus an extension of $\phi$ to the solid torus exists, and the corresponding Wess--Zumino action is well-defined through Definition \ref{def:WZ}. However, we won't need such an explicit extension as we will actually exploit the factorized structure of $\phi$ to compute its Wess--Zumino amplitude.} %
of the field $\phi$ and its square root, appearing in the statement of Theorem~\ref{thm:Main}, it is convenient to establish a general principle allowing to express the Wess--Zumino amplitude of a product of fields in terms of its factors. This is exactly what the Polyakov--Wiegmann formula accomplishes.

\section{Equivariant adjoint Polyakov--Wiegmann formula} \label{sec:Homotopy}

We collect in this Section several results concerning the \emph{Polyakov--Wiegmann formula} that allows to evaluate the Wess--Zumino amplitude of a product of fields. These constitute the main technical tools for the proof of Theorem \ref{thm:Main}, but are also of independent interest.

\subsection{Derivative of the Wess--Zumino action} 

We are first interested in the change of the Wess--Zumino action $\WZ[g]$ under homotopic deformation of the field $g \colon \Sigma \to G$, where $G$ is any compact Lie group. Recall that two continuous maps $f_0, \, f_1 \colon X \to Y$ between topological spaces are called \emph{homotopic} if there exists a continuous map $F \colon X \times [0,1] \to Y$ such that $F(x,0) = f_0(x)$ and $F(x,1) = f_1(x)$ for all $x \in X$. Hereinafter we denote $f_s(x) := F(x,s)$, $(x,s) \in X \times [0,1]$, if $F$ is an homotopy.

As a preliminary result we prove the following
\begin{proposition} \label{prop:HEP}
Let $g_0, \, g_1 \colon \Sigma \to G$ be homotopic fields. Assume that $g_0$ admits an extension $\widetilde{g}_0 \colon \widetilde{\Sigma} \to G$, in the sense of Definition~\ref{def:WZ}. Then there exists a \emph{smooth} homotopy $F : \Sigma \times [0,1] \to G$ between $g_0$ and $g_1$ which lifts to a \emph{smooth} homotopy of extensions $\widetilde{F} : \widetilde{\Sigma} \times [0,1] \to G$, that is, the maps $F$ and $\widetilde{F}$ are smooth with respect to $s \in [0,1]$ and moreover $\widetilde{f}_s \big|_{\partial\widetilde{\Sigma}} \equiv f_s$.
\end{proposition}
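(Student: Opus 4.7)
The statement is the smooth homotopy extension property for the pair $(\widetilde{\Sigma},\partial\widetilde{\Sigma})$, which holds because $\Sigma=\partial\widetilde{\Sigma}$ admits a smooth collar inside $\widetilde{\Sigma}$. My plan is to exploit this collar together with the Lie group structure of $G$, proceeding in three steps: smooth out the given homotopy, spread it into the collar using a bump function, and glue the result to the original extension $\widetilde{g}_0$ in the interior.

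First, if the homotopy between the smooth fields $g_0$ and $g_1$ is only continuous, I would invoke Whitney's smooth approximation theorem for maps into a smooth manifold (applied here to $G$) to replace it, relatively to the endpoints, by a \emph{smooth} homotopy $F\colon\Sigma\times[0,1]\to G$ with $F(\cdot,0)=g_0$ and $F(\cdot,1)=g_1$. Next, by the collar neighbourhood theorem, fix a diffeomorphism of an open neighbourhood $U$ of $\Sigma$ in $\widetilde{\Sigma}$ with $\Sigma\times[0,\varepsilon)$, sending the boundary to $r=0$; in this chart, the restriction of $\widetilde{g}_0$ becomes a smooth map $\widetilde{g}_0(x,r)$ with $\widetilde{g}_0(x,0)=g_0(x)$. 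Choose a smooth cut-off $\rho\colon[0,\varepsilon)\to[0,1]$ with $\rho(0)=1$ and $\rho(r)=0$ for $r\geq\varepsilon/2$, and define
\[
\widetilde{F}(x,r,s) \;:=\; F\bigl(x,s\,\rho(r)\bigr)\cdot g_0(x)^{-1}\cdot \widetilde{g}_0(x,r)
\]
on $U\times[0,1]$, and $\widetilde{F}(y,s):=\widetilde{g}_0(y)$ on $(\widetilde{\Sigma}\setminus U)\times[0,1]$. Then at $r=0$ one gets $\widetilde{F}(x,0,s)=F(x,s)\,g_0(x)^{-1}g_0(x)=F(x,s)$, so $\widetilde{F}$ fibrewise extends $F$; at $s=0$ one recovers $\widetilde{F}(x,r,0)=g_0(x)\,g_0(x)^{-1}\widetilde{g}_0(x,r)=\widetilde{g}_0(x,r)$; and on the overlap region $\rho(r)=0$ the collar formula collapses to $\widetilde{g}_0(x,r)$, so it agrees smoothly with the definition off the collar.

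The one delicate point I anticipate is this gluing across the boundary of the collar. A naïve ansatz such as $F(x,s)\,g_0(x)^{-1}\widetilde{g}_0(x,r)$ fails to coincide with $\widetilde{g}_0$ outside $U$, so the $s$-dependence must be damped before one leaves the collar — this is exactly what the cut-off $\rho$ accomplishes, while the multiplicative ``correction'' $g_0(x)^{-1}\widetilde{g}_0(x,r)$, which makes sense only because $G$ is a group and is smooth because $G$ is a Lie group, is what enforces the boundary condition $\widetilde{F}(\cdot,r,0)=\widetilde{g}_0(\cdot,r)$ at $s=0$. The purely topological input is thus reduced to the existence of a smooth collar and the Whitney approximation theorem, both of which are standard.
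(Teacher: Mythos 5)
Your proof is correct, but it takes a genuinely different route from the paper's. The paper argues purely topologically first and smooths afterwards: it invokes the homotopy extension principle for the pair $(\widetilde{\Sigma},\partial\widetilde{\Sigma})$ from \cite{DavisKirk01} to lift an arbitrary continuous homotopy, and then applies Whitney-type approximation twice (once to make the lifted endpoint map smooth relative to $\Sigma$, once to smooth the concatenated homotopy between the two smooth maps $\widetilde{g}_0$ and the smoothed endpoint extension). You instead smooth the homotopy on $\Sigma\times[0,1]$ once, at the outset, and then prove the lifting step by hand: the collar $U\simeq\Sigma\times[0,\varepsilon)$, the cut-off $\rho$, and the left-translation correction $g_0(x)^{-1}\widetilde{g}_0(x,r)$ produce an explicit lift $\widetilde{F}(x,r,s)=F(x,s\rho(r))\,g_0(x)^{-1}\widetilde{g}_0(x,r)$, whose required properties (restriction to $r=0$ gives $f_s$, restriction to $s=0$ gives $\widetilde{g}_0$, agreement with $\widetilde{g}_0$ where $\rho$ vanishes) you verify directly. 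In effect you reprove the \emph{smooth} homotopy extension property for this pair rather than quoting the continuous one and repairing it. What the paper's argument buys is generality and economy: it uses nothing about $G$ beyond its being a smooth manifold, delegating all work to standard theorems. What yours buys is an explicit formula, joint smoothness of $\widetilde{F}$ in all variables (the statement only demands smoothness in $s$), and the extra feature that $\widetilde{f}_s\equiv\widetilde{g}_0$ outside a collar of the boundary; the price is the (mild) use of the Lie group structure of $G$, which could be replaced by a tubular-neighbourhood retraction if one wanted the same generality as the paper. Two points you leave implicit are harmless but worth stating: extensions in Definition~\ref{def:WZ} are smooth maps (needed so that $\widetilde{g}_0(x,r)$ is smooth on the collar), and Whitney approximation must be used in its relative form on the manifold-with-boundary $\Sigma\times[0,1]$, keeping the already-smooth values on the closed set $\Sigma\times\{0,1\}$ --- the same technicalities underlie the paper's own proof.
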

\begin{proof}
Pick any (continuous) homotopy $H$ between $g_0$ and $g_1$. By the homotopy extension principle \cite[Sec.~6.7]{DavisKirk01}, this lifts to a continuous homotopy $\widetilde{H} \colon \widetilde{\Sigma} \times [0,1] \to G$ such that $\widetilde{h}_s$ restricted to the boundary of $\widetilde{\Sigma}$ coincides with $h_s$ (in particular the restriction of $\widetilde{h}_1$ is $g_1$), and $\widetilde{h}_0$ is the given $\widetilde{g}_0$. The map $\widetilde{h}_1 \colon \widetilde{\Sigma} \to G$ is \textit{a priori} only continuous, but by the Whitney approximation theorem \cite[Thm.~10.21]{Lee03} it is homotopic to a smooth map relative to $\Sigma = \partial \widetilde{\Sigma}$, namely there exists an homotopy $\widetilde{H}' \colon \widetilde{\Sigma} \times [0,1] \to G$ with $\widetilde{h}'_0 \equiv \widetilde{h}_1$, $\widetilde{h}'_1 \colon \widetilde{\Sigma} \to G$ smooth and $\widetilde{h}'_s \big|_{\partial \widetilde{\Sigma}} \equiv g_1$ for all $s \in [0,1]$.

Denote by $\widetilde{H}''$ the concatenation of the two homotopies $\widetilde{H} \sharp \widetilde{H}'$, that is
\[ \widetilde{h}''_s := \begin{cases} \widetilde{h}_{2s} & \text{if } s \in [0,1/2], \\\widetilde{h}'_{2s-1} & \text{if } s \in [1/2,1], \end{cases} \quad s \in [0,1]. \]
Thus $\widetilde{H}'' \colon \widetilde{\Sigma} \times [0,1] \to G$ is a continuous homotopy between the smooth maps $\widetilde{h}_0 = \widetilde{g}_0$ and $\widetilde{h}'_1$. By standard approximation results \cite[Prop.~10.22]{Lee03}, $\widetilde{H}''$ can be replaced by a \emph{smooth} homotopy $\widetilde{F}$ between the two maps. The restriction $f_s$ of $\widetilde{f}_s$ to the boundary of $\widetilde{\Sigma}$ provides the desired smooth homotopy between $g_0$ and $g_1$.
\end{proof}

In view of the above result, hereinafter we will assume, whenever we speak of homotopic fields and homotopies of their extensions, that the homotopies depend smoothly also on the deformation parameter $s \in [0,1]$.

Consider now a smooth family  of fields $g_s : \Sigma \rightarrow G$. We use the smooth extension $\widetilde g_s : \widetilde \Sigma \to G$ provided by the above Proposition to define the corresponding action $S_{\rm WZ}[g_s]$. We want to compute 
\[ \dfrac{\di S_{\rm WZ}[g_s]}{\di s} = \dfrac{1}{12\pi} \int_{\widetilde \Sigma} \partial_s \Tr\big\{ (\widetilde g_s^{-1} \di \widetilde g_s )^3\big\}. 
\]
We have
\begin{align*}
\dfrac{\di S_{\rm WZ}[g_s]}{\di s}& \, = \dfrac{1}{4\pi} \int_{\widetilde \Sigma}  \Tr\Big\{ \big(- \widetilde g_s^{-1}(\partial_s \widetilde  g_s) \widetilde g_s^{-1} \di \widetilde g_s + \widetilde g_s^{-1} \di \partial_s \widetilde g_s \big) (\widetilde g_s^{-1} \di \widetilde g_s )^2\Big\}\cr
& \, = \dfrac{1}{4\pi} \int_{\widetilde \Sigma} \Tr\Big\{ \big(- (\partial_s \widetilde  g_s) \widetilde g_s^{-1} \di \widetilde g_s \, \widetilde g_s^{-1} +  \di \partial_s \widetilde g_s\,   \widetilde g_s^{-1}\big) (\di \widetilde g_s \,\widetilde g_s^{-1})^2\Big\}\cr
&\, = \dfrac{1}{4\pi} \int_{\widetilde \Sigma} \Tr \Big\{ \di (\partial_s \widetilde g_s \, \widetilde g_s^{-1})  (\di \widetilde g_s \,\widetilde g_s^{-1})^2 \Big\}
\end{align*}
where we have used the cyclicity of the trace in the second line. Then using the fact that $ \di \Tr (\di \widetilde g_s \,\widetilde g_s^{-1})^2 = 0$, we get
\begin{align*}
\dfrac{\di S_{\rm WZ}[g_s]}{\di s} &\, = \dfrac{1}{4\pi} \int_{\widetilde \Sigma} \di \Tr \Big\{  \partial_s \widetilde g_s \, \widetilde g_s^{-1}  (\di \widetilde g_s \,\widetilde g_s^{-1})^2 \Big\}\cr
&\, = \dfrac{1}{4\pi} \int_{\Sigma} \Tr \Big\{  \partial_s  g_s \,  g_s^{-1}  (\di g_s \, g_s^{-1})^2 \Big\}
\end{align*}
or equivalently by cyclicity of the trace
\begin{equation}\label{dSwz}
\dfrac{\di S_{\rm WZ}[g_s]}{\di s} = \dfrac{1}{4\pi} \int_{\Sigma} \Tr \Big\{ g_s^{-1} \partial_s  g_s \, (g_s^{-1} \di g_s)^2 \Big\}
\end{equation}
The above formula establishes the required rate of change of the Wess--Zumino action with respect to homotopic changes in the field. In particular, it manifestly shows that the derivative of the Wess--Zumino action is independent of the choice of $\widetilde \Sigma$ and of a smooth family  $\widetilde g_s$ of extension%
\footnote{From the field theory point of view, it means that the ambiguity appearing in the definition of the Wess--Zumino action vanishes in the corresponding equations of motion.}.

\begin{remark}[Variation of $\WZ$ under homotopy]
From the previous formula we deduce that for two homotopic fields $g_0$ and $g_1$ one has
\[\WZ[g_1]-\WZ[g_0] = \dfrac{1}{12\pi} \int_{[0,1]\times \Sigma} \Tr \left\{ (\widehat g^{-1} \di \widehat g)^3\right\}, \]
where $\widehat g(s,\sigma) := g_s(\sigma)$, $(s,\sigma) \in [0,1] \times \Sigma$. This formula has to be understood modulo $2 \pi \mathbb Z$ in general as it depends on the choice of the homotopy. The above identity can be applied to understand the definition \eqref{eq:def_K} of $\mathcal K$. Notice that in that case $\Phi$ provides an homotopy between $\phi_0$ and $\phi_\pi$, the homotopy parameter being $s=k_1 \in [0,\pi]$. Thus we have in that case
\[\WZ[\phi_\pi]-\WZ[\phi_0] = -\dfrac{1}{12\pi} \int_{S^1 \times \EBZ} \Tr \left\{ (\Phi^{-1} \di \Phi)^3\right\}, \]
the minus sign coming from the fact that the homotopy parameter is in second position in $S^1 \times \EBZ$. This equality provides a direct proof that $\mathcal K^2 = 1$, and  suggest an interpretation of $\mathcal K$ as an obstruction to the validity a time-reversal equivariant version of the previous equality (indeed the homotopy $\Phi$ is not equivariant in the sense of \eqref{EquivHomotopy}, as $s$ is sent to $-s$ under time-reversal symmetry).
\end{remark}

\subsection{Polyakov--Wiegmann formula}
 
As mentioned above, the Polyakov--Wiegmann formula \cite{PolyakovWiegmann} is used to compute the Wess--Zumino amplitude for the product of two fields $g, h \colon \Sigma \to G$ (defined pointwise as $gh(\sigma) = g(\sigma) \, h(\sigma)$) when $G$ is compact and simply connected, and has been generalized to any compact simple Lie group \cite{GawedzkiWaldorf09} (compare also Remark~\ref{rk:PWanomaly} below). Aiming at applications where $\Sigma = \T^2$ and $G = U(N)$ is neither simple nor simply connected, we investigate this setting by making use of homotopic deformations of the fields.

We start from a general result. Given two fields $g, h \colon \Sigma \to G$, we define the \emph{Polyakov--Wiegmann functional}
\[ \mathrm{PW}[g,h] := S_{\rm WZ}[g h] - S_{\rm WZ}[g]- S_{\rm WZ}[h] -\dfrac{1}{4\pi} \int_{\Sigma} (g \times h)^*\alpha, \]
where
\[ (g \times h)^*\alpha := - \Tr(g^{-1} \di g\,  \di h \, h^{-1}) \]
(\ie $(g \times h)^*\alpha$ is the pullback via the map $g \times h \colon \Sigma \to G \times G$, $(g \times h)(\sigma) = (g(\sigma), h(\sigma))$, of the differential $2$-form $\alpha$ on $G \times G$ defined by the right-hand side of the above equality).

\begin{proposition}\label{prop:PWh}
Let $g_0, \, g_1 \colon \Sigma \to G$ and $h_0, \, h_1 \colon \Sigma \to G$ be two pairs of homotopic fields. Then
\[ \mathrm{PW}[g_0,h_0] = \mathrm{PW}[g_1,h_1]. \]
\end{proposition}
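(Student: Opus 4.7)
\smallskip

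My plan is to show that $\mathrm{PW}$ is invariant under smooth deformations of the two fields, by computing its derivative along an arbitrary smooth homotopy and verifying that this derivative vanishes identically. Since $\mathrm{PW}$ depends only on the fields $g, h$ (not on any choice of extension, because the ambiguity in $S_{\rm WZ}[g]$, $S_{\rm WZ}[h]$ and $S_{\rm WZ}[gh]$ is a multiple of $2\pi$ and we are working with the action, not its amplitude—although one should also check that the combination in $\mathrm{PW}$ actually has an \emph{unambiguous} real value, which follows from the fact that the $2\pi\Z$ ambiguities of the three Wess--Zumino terms cancel modulo $2\pi$ when taking into account the $\alpha$-correction, or simply by noting that our argument shows the derivative is independent of the chosen extensions). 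By Proposition~\ref{prop:HEP}, both given homotopies between $g_0, g_1$ and between $h_0, h_1$ may be chosen smooth, and their extensions may be chosen smooth as well; so it suffices to prove $\frac{\di}{\di s}\mathrm{PW}[g_s, h_s] = 0$ for an arbitrary pair of smooth homotopies.

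By linearity of the $s$-derivative the total variation splits as $\frac{\di}{\di s}\mathrm{PW}[g_s, h_s] = A(g_s, h_s; \partial_s g_s) + B(g_s, h_s; \partial_s h_s)$, where $A$ captures the infinitesimal effect of varying only $g$ (with $h$ fixed) and $B$ the analogous contribution from varying $h$. I plan to show $A \equiv 0$ by a direct computation; the vanishing of $B$ will follow from an entirely parallel argument (or from the observation that $g \mapsto g^{-1}$, $h \mapsto h^{-1}$ exchange the roles of the two factors up to a change of sign of $\alpha$). For $A$, I apply formula \eqref{dSwz} to both $S_{\rm WZ}[g_s]$ and $S_{\rm WZ}[g_s h]$: in the latter case, $(g_s h)^{-1}\partial_s(g_s h) = h^{-1}(g_s^{-1}\partial_s g_s)h$ and $(g_s h)^{-1}\di(g_s h) = h^{-1}(g_s^{-1}\di g_s)h + h^{-1}\di h$, so the cube in \eqref{dSwz} expands into eight terms which, via cyclicity of the trace and the conjugation by $h$, can be reorganised in terms of $X := g_s^{-1}\partial_s g_s$, $\Omega := g_s^{-1}\di g_s$ and $\Lambda := (\di h) h^{-1}$. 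A parallel direct differentiation of $\frac{1}{4\pi}\int_\Sigma (g_s\times h)^*\alpha = -\frac{1}{4\pi}\int_\Sigma \Tr(\Omega\,\di h\,h^{-1})$, together with Stokes' theorem on the closed surface $\Sigma$ (which lets one freely integrate by parts and discard exact $2$-forms), should collapse the difference $A$ to zero.

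The main obstacle is precisely this algebraic bookkeeping: several of the eight terms produced by expanding the cube in $\frac{\di}{\di s}S_{\rm WZ}[g_s h]$ are genuinely non-commutative (matrix-valued $2$-forms conjugated by $h$ do not slide past each other), and one must recognize which ones reproduce $\frac{\di}{\di s}S_{\rm WZ}[g_s]$, which ones are exact on $\Sigma$ and thus vanish by Stokes, and which ones match the $s$-derivative of the $\alpha$-term. The key identities needed are the Maurer--Cartan relations $\di\Omega = -\Omega^2$ and $\di\Lambda = \Lambda^2$, together with $\di X = [X,\Omega] + \partial_s\Omega$, which convert exterior derivatives of the relevant $2$-forms into cubic polynomials in $\Omega$, $\Lambda$ and $X$; a careful matching term-by-term then gives the cancellation. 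Once $A = 0$ and $B = 0$ are established, $\mathrm{PW}[g_s, h_s]$ is constant in $s$ and the proposition follows.
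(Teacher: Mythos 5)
Your proposal is correct and follows essentially the same route as the paper: both reduce the claim to $\frac{\di}{\di s}\mathrm{PW}[g_s,h_s]=0$ by invoking Proposition~\ref{prop:HEP} to get smooth homotopies with smooth extensions, then apply formula \eqref{dSwz} to the product field, expand using cyclicity of the trace, and cancel the leftover terms against the $s$-derivative of the $\alpha$-term via Stokes' theorem on the closed surface $\Sigma$. The only organizational difference is that you split the variation into a $g$-part and an $h$-part (disposing of the latter by the inversion symmetry $\mathrm{PW}[h^{-1},g^{-1}]=-\mathrm{PW}[g,h]$), whereas the paper varies both fields simultaneously in one computation; the algebra you defer to ``careful matching'' is exactly the computation carried out there, and it does close.
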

\begin{proof}
We prove that
\begin{equation}\label{dPW}
\dfrac{\di \,}{\di s} \mathrm{PW}[g_s,h_s] = 0.
\end{equation}
The proof of \eqref{dPW} just requires formula \eqref{dSwz} and some differential calculus. In order to have a lighter notation, we drop from here on after the dependence on $s$, and denote $g'$ for $\partial_s g_s$ (and $\di$ is still the total derivative on the $\Sigma$-variable). In this notation
\begin{equation}
\dfrac{\di \,}{\di s} S_{\rm WZ}[g h] = \dfrac{1}{4\pi} \int_{\Sigma} \Tr \Big\{ h^{-1}g^{-1} (gh)' \,   (h^{-1} g^{-1} \di (gh) )^2 \Big\}
\end{equation}
Expanding both kind of derivatives and using the cyclicity of the trace
\begin{align*}
 &\Tr \Big\{  h^{-1}g^{-1} (gh)'   (h^{-1} g^{-1} \di(gh) )^2 \Big\} \cr
 & =  \Tr \Big\{  (h^{-1}g^{-1} g' h  + h^{-1} h')   (h^{-1} g^{-1} \di g\, h + h^{-1} \di h)^2 \Big\}\cr
 & = \Tr \Big\{ (h^{-1}g^{-1} g' h  + h^{-1} h') h^{-1}  \big( (g^{-1} \di g)^2  + g^{-1} \di g \di h\, h^{-1} + \di h\, h^{-1} g^{-1} \di g +  (\di h\, h^{-1})^2  \big)h \Big\}\cr
 & = \Tr \Big\{ g^{-1} g' (g^{-1} \di g)^2 + h^{-1} h' (h^{-1} \di h)^2 + g^{-1} g'  g^{-1} \di g  \di h \,h^{-1} + g^{-1} g' \di h \,h^{-1} \, g^{-1} \di g \cr
 & \hspace{1cm} + g^{-1} g' (\di h \,h^{-1})^2+ h' h^{-1}(g^{-1} \di g)^2 + h' h^{-1}  g^{-1} g' \di h \,h^{-1} + h' h^{-1} \di h \,h^{-1} \, g^{-1} \di g\Big\}.
\end{align*}
The first two terms on the right-hand side correspond to the derivative of the Wess--Zumino action for the fields $g$ and $h$, see \eqref{dSwz}, so that
\begin{equation}\label{inter1}
\begin{aligned}
 \dfrac{\di \,}{\di s} \Big( S_{\rm WZ}[gh] & - S_{\rm WZ}[g]- S_{\rm WZ}[h] \Big) \\ 
 & = \dfrac{1}{4\pi} \int_\Sigma \Tr \Big\{ g^{-1} g' \big(g^{-1} \di g  \di h \,h^{-1} +  \di h \,h^{-1} g^{-1} \di g  + (\di h \,h^{-1})^2\big)\cr
 & \hspace{2cm} +  h' h^{-1} \big(g^{-1} \di g  \di h \,h^{-1} +  \di h \,h^{-1} g^{-1} \di g  + (g^{-1} \di g )^2\big) \Big\}.
\end{aligned}
\end{equation}

Finally,
\[ \dfrac{\di \,}{\di s} \Big( -\dfrac{1}{4\pi} \int_{\Sigma} (g,h)^*\alpha \Big) = \dfrac{1}{4\pi} \int_{\Sigma} \Tr \big\{ \big( g^{-1} \di g \di h \, h^{-1} \big)' \big\} \]
where
\begin{align*}
\Tr\Big\{ \big( g^{-1} \di g \di h \, h^{-1} \big)' \Big\}  = \Tr\Big\{ &- g^{-1} g' g^{-1} \di g \di h\, h^{-1}+ g^{-1} \di(g') \di h\, h^{-1} \\
& + g^{-1} \di g(\di h') \, h^{-1} - g^{-1} \di g  \di h \, h^{-1} h' h^{-1} \Big\}.
\end{align*}
The first and last terms already cancel the first and fourth ones in \eqref{inter1}, whereas
\[ \Tr\big\{ g^{-1} \di(g')\di h\, h^{-1} \big \}= \di \Tr\{ g^{-1} g' \di h\, h^{-1}\} - \Tr \big\{ -g^{-1} \di g\, g^{-1} g' \di h\, h^{-1} + g^{-1} g' (\di h\, h^{-1})^2\big\}. \]
When integrated, the first term is zero using Stokes formula, since $\partial \Sigma = \emptyset$, and the remaining ones cancel with the second and third of \eqref{inter1} since $ \Tr \{ -g^{-1} \di g\, g^{-1} g' \di h\, h^{-1}\} =  \Tr \{ g^{-1} g' \di h\, h^{-1} g^{-1} \di g\}$. With a similar argument of integration by parts on $\Tr\{g^{-1} \di g(\di h') \, h^{-1}\}$, we cancel the last two remaining terms in \eqref{inter1} and conclude the proof of \eqref{dPW}.
\end{proof}

\subsection{Adjoint Polyakov--Wiegmann formula}

We now consider a version of the Polyakov--Wiegmann functional when the product of fields is replaced by the adjoint action, namely we focus on the \emph{adjoint Polyakov--Wiegmann functional}
\[ \mathrm{APW}[g,h] := S_{\rm WZ}[g h  g^{-1}] - S_{\rm WZ}[h] -\dfrac{1}{4\pi} \int_{\Sigma}  (g \times h )^*\beta \]
where
\begin{equation}\label{def_beta}
(g \times h)^*\beta = -  \Tr \Big\{ h (g^{-1} \di g) h^{-1} (g^{-1} \di g) + g^{-1} \di g(h^{-1} \di h + \di h\, h^{-1}) \Big\}.
\end{equation}

\begin{proposition} \label{prop:APW01}
Let $g_0, \, g_1 \colon \Sigma \to G$ and $h_0, \, h_1 \colon \Sigma \to G$ be two pairs of homotopic fields. Then
\[ \mathrm{APW}[g_0,h_0] = \mathrm{APW}[g_1,h_1]. \]
\end{proposition}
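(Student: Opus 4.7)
My plan is to reduce the adjoint Polyakov--Wiegmann functional to two ordinary Polyakov--Wiegmann functionals and then invoke Proposition~\ref{prop:PWh}. The key observation is the factorization $g h g^{-1} = (gh) \cdot g^{-1}$, which invites two successive applications of the definition of $\mathrm{PW}$: first to the pair $(gh, g^{-1})$, which expresses $S_{\rm WZ}[g h g^{-1}]$ in terms of $S_{\rm WZ}[gh]$, $S_{\rm WZ}[g^{-1}]$, an integral of $(gh \times g^{-1})^*\alpha$, and $\mathrm{PW}[gh, g^{-1}]$; then to the pair $(g, h)$, which expresses $S_{\rm WZ}[gh]$ in terms of $S_{\rm WZ}[g]$, $S_{\rm WZ}[h]$, an integral of $(g\times h)^*\alpha$, and $\mathrm{PW}[g,h]$. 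The two contributions $S_{\rm WZ}[g]$ and $S_{\rm WZ}[g^{-1}]$ then cancel against each other, since the pullback of $\chi$ under $\widetilde{g}^{-1}$ equals the negative of the pullback under $\widetilde{g}$ (by $\widetilde{g}^{-1}\mathrm{d}\widetilde{g}^{-1} = -\mathrm{d}\widetilde{g}\cdot \widetilde{g}^{-1}$ and cyclicity of the trace applied to the cube).

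Combining these two identities yields
\[
\mathrm{APW}[g,h] \;=\; \mathrm{PW}[gh,g^{-1}] + \mathrm{PW}[g,h] + \frac{1}{4\pi}\int_\Sigma \Bigl((gh\times g^{-1})^*\alpha + (g\times h)^*\alpha - (g\times h)^*\beta\Bigr).
\]
The entire homotopy-invariance claim then comes down to one algebraic check on $\Sigma$: the $2$-form in parentheses vanishes identically. I would verify this by writing $\omega := g^{-1}\mathrm{d}g$, $\eta_h := h^{-1}\mathrm{d}h$, $\widetilde{\eta}_h := \mathrm{d}h\cdot h^{-1} = h\eta_h h^{-1}$, and expanding $(gh)^{-1}\mathrm{d}(gh) = h^{-1}\omega h + \eta_h$ and $\mathrm{d}(g^{-1})\cdot g = -\omega$. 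After this substitution the three pullbacks become linear combinations of the terms $\mathrm{Tr}(h^{-1}\omega h\,\omega)$, $\mathrm{Tr}(h\omega h^{-1}\omega)$, $\mathrm{Tr}(\eta_h\,\omega)$, $\mathrm{Tr}(\omega\,\eta_h)$, $\mathrm{Tr}(\omega\,\widetilde\eta_h)$, and the cancellations are produced by two uses of graded cyclicity: $\mathrm{Tr}(\omega\,\eta_h) = -\mathrm{Tr}(\eta_h\,\omega)$ for the mixed pair, and $\mathrm{Tr}(h^{-1}\omega h\,\omega) + \mathrm{Tr}(h\omega h^{-1}\omega) = 0$, which one obtains by regrouping $\mathrm{Tr}(\omega h^{-1}\omega h)$ either as a trace of two degree-$1$ matrix-valued $1$-forms or as a trace of a degree-$2$ form against the $0$-form $h$.

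Once the integrand is shown to vanish pointwise, the displayed equality reads
\[
\mathrm{APW}[g,h] = \mathrm{PW}[gh,g^{-1}] + \mathrm{PW}[g,h],
\]
and homotopy invariance of $\mathrm{APW}$ follows immediately from Proposition~\ref{prop:PWh}, noting that homotopies $g_s,h_s$ induce homotopies $g_sh_s$ and $g_s^{-1}$ of the corresponding products and inverses. The main obstacle in this plan is precisely the bookkeeping in the sign conventions of the graded cyclicity of the trace on matrix-valued differential forms; other than that, the argument is algebraically short and conceptually transparent, and avoids the lengthy direct differentiation of $S_{\rm WZ}[g_s h_s g_s^{-1}]$ that would be the alternative. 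If the pointwise cancellation of the $2$-form were to fail (up to an exact form), a fallback would still work: the integral $\int_\Sigma$ of any closed $2$-form pulled back from $G \times G$ is homotopy-invariant, and only the exactness of $(gh\times g^{-1})^*\alpha + (g\times h)^*\alpha - (g\times h)^*\beta$ up to $\mathrm{d}$-closed terms would be needed, but the direct computation above shows the stronger pointwise identity.
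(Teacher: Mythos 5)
Your argument is correct and is essentially the paper's own proof in mirror image: the paper likewise reduces $\mathrm{APW}$ to two ordinary Polyakov--Wiegmann functionals, only it brackets the conjugation as $g\cdot(h g^{-1})$ (applying \eqref{dPW} to the pairs $(g, h g^{-1})$ and $(h, g^{-1})$), uses $\WZ[g^{-1}] = -\WZ[g]$, and verifies the pointwise identity $(g\times h g^{-1})^*\alpha + (h\times g^{-1})^*\alpha = (g\times h)^*\beta$, which is the exact analogue of your $2$-form cancellation. Your bracketing $(gh)\cdot g^{-1}$, your direct appeal to the integrated statement of Proposition~\ref{prop:PWh} rather than to the derivative formula, and your graded-cyclicity verification that the relevant $2$-form vanishes identically are all sound, and constitute only cosmetic differences from the paper's argument.
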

\begin{proof}
We denote by $g_s$ and $h_s$ two smooth homotopies between the fields, but we drop the dependence on $s$ in what follows to streamline the notation. 

Replacing $h$ with $h g^{-1}$ in \eqref{dPW} we get
\[ \dfrac{\di \,}{\di s} \Big( S_{\rm WZ}[g h  g^{-1}] - S_{\rm WZ}[g]- S_{\rm WZ}[h g^{-1}] -\dfrac{1}{4\pi} \int_{\Sigma} (g \times h g^{-1})^*\alpha \Big) = 0. \]
Besides,
\[ \dfrac{\di \,}{\di s} \Big( S_{\rm WZ}[ h g^{-1} ] - S_{\rm WZ}[h]- S_{\rm WZ}[g^{-1}] -\dfrac{1}{4\pi} \int_{\Sigma} (h \times g^{-1})^*\alpha \Big) = 0 
\]
and $S_{\rm WZ}[g^{-1}] = - S_{\rm WZ}[g]$ as follows from $\Tr\{ (\widetilde g \, \di (\widetilde g^{-1}))^3 \} = - \Tr\{ (\widetilde g^{-1} \di \widetilde g)^3\}$ and Definition~\ref{def:WZ}. Summing the two previous equations we obtain
\[ \dfrac{\di \,}{\di s} \Big( S_{\rm WZ}[g h  g^{-1}] - S_{\rm WZ}[h] -\dfrac{1}{4\pi} \int_{\Sigma}  (g \times h g^{-1})^*\alpha +  (h \times g^{-1})^*\alpha \Big)  = 0. \]
Finally
\begin{align*}
(g \times h g^{-1})^*\alpha +  (h \times g^{-1})^*\alpha & =  -\Tr\big\{ g^{-1} \di g \di(h g^{-1}) g h^{-1} \big\} - \Tr\big\{ h^{-1}\di h (\di g^{-1}) g \big\} \cr
&= -\Tr\big\{ g^{-1} \di g \di h \, h^{-1} \big\}  + \Tr\big\{ g^{-1} \di g h g^{-1} \di g\,  h^{-1} \big\} + \Tr\big\{ h^{-1}\di h g^{-1} \di g \big\} \cr
&= - \Tr\Big\{ g^{-1} \di g \di h \, h^{-1} +  h g^{-1} \di g\,  h^{-1} g^{-1} \di g + g^{-1} \di g h^{-1}\di h \Big\}\cr
&= (g \times h)^*\beta,
\end{align*}
compare \eqref{def_beta}. We deduce then
\begin{equation}\label{dCPW}
\dfrac{\di \,}{\di s} \mathrm{APW}[g_s,h_s]  = 0
\end{equation}
which concludes the proof.
\end{proof}

\subsection{\texorpdfstring{Equivariant adjoint Polyakov--Wiegmann formula for $U(N)$-valued fields on the torus}{Equivariant adjoint Polyakov--Wiegmann formula for U(N)-valued fields on the torus}}

Using the ``normal form'' for the (equivariant) homotopy class of a map $g \colon \T^2 \to U(N)$, provided by Lemma~\ref{T2->U(N)} in the Appendix, we are able to prove the adjoint Polyakov--Wiegmann formula and its equivariant version for $U(N)$-valued fields defined on $\Sigma = \T^2$.

\begin{theorem}[(Equivariant) adjoint Polyakov--Wiegmann formula] 
\label{thm:APW}
Let $g, h \colon \T^2 \to U(N)$ be two fields. Then
\begin{equation} \label{eqn:APW}
\exp \left(\iu \, \WZ[g h g^{-1}]\right) = \exp \left(\iu \, \WZ[h]\right) \, \exp \left(\frac{\iu}{4 \pi} \int_{\Sigma} (g \times h)^* \beta \right)
\end{equation}
where $(g \times h)^* \beta$ is as in \eqref{def_beta}. 

If moreover the two fields are equivariant, then
\begin{equation} \label{eqn:EAPW}
\sqrt{\exp \left(\iu \, \WZ[g h g^{-1}]\right)} = \sqrt{\exp \left(\iu \, \WZ[h]\right)} \, \exp \left(\frac{\iu}{8 \pi} \int_{\Sigma} (g \times h)^* \beta \right).
\end{equation}
\end{theorem}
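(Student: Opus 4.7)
The backbone of the argument is the homotopy invariance of the adjoint Polyakov--Wiegmann functional
\[ \mathrm{APW}[g,h] = \WZ[g h g^{-1}] - \WZ[h] - \frac{1}{4\pi}\int_\Sigma (g \times h)^*\beta \]
established in Proposition~\ref{prop:APW01}, together with the normal-form classification of (equivariant) homotopy classes of $U(N)$-valued maps on $\T^2$ provided by Lemma~\ref{T2->U(N)} in the Appendix. The strategy is to reduce both identities to a single explicit computation on diagonal representatives.

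For the non-equivariant identity \eqref{eqn:APW}, I note that it is equivalent to $\mathrm{APW}[g,h] \in 2\pi\Z$. Since this quantity is invariant under independent smooth homotopies of $g$ and $h$ by Proposition~\ref{prop:APW01}, it is enough to verify it on convenient representatives. By Lemma~\ref{T2->U(N)}, each field is homotopic to a diagonal normal form
\[ g_0(\bk) = \diag\bigl(\eu^{\iu(n_1 k_1 + n_2 k_2)}, 1, \dots, 1\bigr), \qquad h_0(\bk) = \diag\bigl(\eu^{\iu(m_1 k_1 + m_2 k_2)}, 1, \dots, 1\bigr), \]
where $(n_1,n_2)$ and $(m_1,m_2)$ are the winding numbers of $\det g$ and $\det h$. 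Such representatives commute, so $g_0 h_0 g_0^{-1} = h_0$ and $\WZ[g_0 h_0 g_0^{-1}] = \WZ[h_0]$. The computation of $(g_0 \times h_0)^*\beta$ is immediate: both $g_0^{-1}\di g_0$ and $h_0^{-1}\di h_0$ are supported on the $(1,1)$ entry, the ``mixed'' term $h g^{-1}\di g\, h^{-1}\, g^{-1}\di g$ reduces to $(g^{-1}\di g)^2 \equiv 0$, and the remaining summands collapse to $2\, \di\alpha \wedge \di\gamma$ with $\alpha = n_1 k_1 + n_2 k_2$, $\gamma = m_1 k_1 + m_2 k_2$. Integration over $\T^2$ yields $8\pi^2(n_1 m_2 - n_2 m_1)$, so $\mathrm{APW}[g_0, h_0] = -2\pi(n_1 m_2 - n_2 m_1) \in 2\pi\Z$, as required.

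For the equivariant identity \eqref{eqn:EAPW} the same scheme applies, but one has to keep track of an additional factor of $2$. The first ingredient is an equivariant version of Proposition~\ref{prop:APW01}: along a smooth path of pairs of equivariant fields $(g_s, h_s)$ the quantity $\mathrm{APW}[g_s,h_s]$ is still constant, and the ambiguity coming from the choice of \emph{equivariant} extensions used in defining $\WZ$ is now $4\pi\Z$ rather than $2\pi\Z$. The second ingredient is the equivariant version of Lemma~\ref{T2->U(N)}: the constraint $g(-\bk) = \theta\, g(\bk)\, \theta^{-1}$ with $\theta = JK$ interleaves the diagonal entries in symplectic pairs, and, by the argument underlying Lemma~\ref{T->U(N)}\ref{1d-Equiv} applied to the invariant loops $\T_0$ and $\T_\pi$ (which generate $H_1(\T^2,\Z)$), it forces the admissible winding numbers $n_1,n_2,m_1,m_2$ to be \emph{even}. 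Running the previous computation on these equivariant normal forms then gives $n_1 m_2 - n_2 m_1 \in 4\Z$, hence $\mathrm{APW}[g,h] \in 8\pi\Z \subset 4\pi\Z$, which is precisely the condition $\mathrm{APW}[g,h]/2 \in 2\pi\Z$ encoded by \eqref{eqn:EAPW}.

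I expect the main technical obstacle to be the equivariant normal form together with the parity count. Extracting the correct parity of both winding numbers of an equivariant $g \colon \T^2 \to U(N)$ requires restricting to the $\vartheta$-invariant loops $\T_0$ and $\T_\pi$, invoking the $1$-dimensional equivariant statement of Lemma~\ref{T->U(N)}\ref{1d-Equiv}, and checking that a genuinely equivariant homotopy (not merely a homotopy between equivariant maps) can be constructed, so that Proposition~\ref{prop:APW01} upgrades to an identity modulo $4\pi\Z$. Once this is in place, the remainder of the argument is the one-line diagonal calculation already sketched, and both \eqref{eqn:APW} and \eqref{eqn:EAPW} follow simultaneously.
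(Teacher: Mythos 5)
Your strategy is the same as the paper's: combine the homotopy invariance of $\mathrm{APW}$ (Proposition~\ref{prop:APW01}) with the normal forms of Lemma~\ref{T2->U(N)} and an explicit diagonal computation. Your treatment of the non-equivariant identity \eqref{eqn:APW} is correct and agrees with the paper, including the value $\mathrm{APW}[g_0,h_0]=-2\pi(n_1m_2-n_2m_1)$, and you rightly identify that the equivariant case needs an equivariant upgrade of Proposition~\ref{prop:APW01} (constancy along \emph{equivariant} homotopies, with extension ambiguity $4\pi\Z$) together with an equivariant homotopy to an equivariant representative.

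However, your equivariant computation is wrong, because you run ``the previous computation'' on the wrong normal form. The equivariant normal form of part~\ref{2d-YesEquiv} of Lemma~\ref{T2->U(N)} is \emph{not} $\diag(\eu^{\iu(n_1k_1+n_2k_2)},1,\dots,1)$ with $n_1,n_2$ even: that matrix is not equivariant at all, since $\Theta$ conjugates \emph{and swaps} diagonal entries within each symplectic pair, so the $(1,1)$ entry of an equivariant diagonal field is tied to the $(2,2)$ entry, not to itself. The true normal form carries the phase twice, $g_1=\diag\bigl(\eu^{\iu(ak_1+bk_2)},\eu^{\iu(ak_1+bk_2)},1,\dots,1\bigr)$ with $(n_1,n_2)=(2a,2b)$ the even winding numbers of $\det g$, and redoing the computation (the trace now collects two identical entries while the exponents are halved) gives
\begin{equation*}
(g_1\times h_1)^*\beta = 4(ad-bc)\,\di k_1\wedge\di k_2, \qquad \mathrm{APW}[g_1,h_1]=-4\pi(ad-bc)=-\pi(n_1m_2-n_2m_1),
\end{equation*}
which lies in $4\pi\Z$ but in general \emph{not} in $8\pi\Z$: for $g_1=\diag(\eu^{\iu k_1},\eu^{\iu k_1},1,\dots,1)$ and $h_1=\diag(\eu^{\iu k_2},\eu^{\iu k_2},1,\dots,1)$ one gets exactly $-4\pi$. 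So your claim $\mathrm{APW}[g,h]\in 8\pi\Z$ is false. The theorem survives because \eqref{eqn:EAPW} only requires $\mathrm{APW}[g,h]\in4\pi\Z$, which the corrected computation yields -- this is precisely the paper's proof -- but the distinction is not cosmetic: had you literally reduced to the single-entry even-winding form, the connecting homotopy would be non-equivariant, and then only the mod-$2\pi\Z$ control of Proposition~\ref{prop:APW01} would be available, which says nothing about square roots. (A second, minor slip: $\T_0$ and $\T_\pi$ are homologous loops, so they do not generate $H_1(\T^2;\Z)$; to pin down both winding numbers one must restrict to two transverse invariant loops such as $\T\times\{0\}$ and $\{0\}\times\T$, as in Lemma~\ref{T2->U(N)}.)
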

\begin{proof}
We begin with the non-equivariant case. Each field $g \colon \T^2 \to U(N)$ is characterized up to homotopy by the two winding numbers $(n_g, m_g) \in \Z^2$ along the two independent loops in $\T^2 = \T \times \T$, by virtue of part~\ref{2d-NonEquiv} of Lemma~\ref{T2->U(N)}. In particular, $g$ is homotopic to
\begin{equation}\label{eqn:Normalform_g1}
 g_1(k_1, k_2) := \diag \left( \eu^{\iu (k_1 n_g + k_2 m_g)}, 1, \ldots, 1 \right), \quad (k_1, k_2) \in \T^2. 
\end{equation}
Similarly, $h$ is homotopic to $h_1$ of the same form. Since $g_1 h_1 g_1^{-1} = h_1$, one can readily compute
\begin{equation} \label{APW1}
\mathrm{APW}[g_1, h_1] = - \frac{1}{4 \pi} \int_{\Sigma} (g_1 \times h_1)^* \beta = - 2 \pi \left(n_g m_h - m_g n_h \right) \in 2 \pi \Z.
\end{equation}
In view of Proposition~\eqref{prop:APW01}, we have that also $\mathrm{APW}[g,h] \in 2 \pi \Z$, and consequently $\eu^{\iu \, \mathrm{APW}[g,h]} = 1$. Spelling out this equality gives exactly \eqref{eqn:APW}.

A similar argument holds in the equivariant case. In this setting, the ``normal form'' of the field $g$ prescribed by part~\ref{2d-YesEquiv} of Lemma~\ref{T2->U(N)} is
\[ g_1(k_1, k_2) := \diag \left( \eu^{\iu (k_1 n_g + k_2 m_g)}, \eu^{\iu (k_1 n_g + k_2 m_g)}, 1, \ldots, 1 \right), \quad (k_1, k_2) \in \T^2, \]
where $2 n_g \in 2 \Z$ and $2 m_g \in 2 \Z$ are the winding numbers of $\det g$ along the two loops in $\T^2 = \T \times \T$. The field $h$ admits an analogous normal form $h_1$, and again $g_1 h_1 g_1^{-1} = h_1$. A similar computation to \eqref{APW1} shows that this time
\[ \mathrm{APW}[g_1, h_1] = - \frac{1}{4 \pi} \int_{\Sigma} (g_1 \times h_1)^* 
\beta = - 4 \pi \left(n_g m_h - m_g n_h \right) \in 4 \pi \Z, \]
so that $\mathrm{APW}[g, h] \in 4 \pi \Z$ as well, again in view of Proposition~\eqref{prop:APW01}. Consequently $\sqrt{\eu^{\iu \, \mathrm{APW}[g,h]}}$ is well-defined and equals $1$. This readily implies \eqref{eqn:EAPW}.
\end{proof}

\begin{remark}[Anomaly of the Polyakov--Wiegmann formula]\label{rk:PWanomaly}
The usual Polyakov--Wiegmann formula \cite{PolyakovWiegmann} can be compactly written as $\eu^{\iu \mathrm{PW}[g,h]} = 1$, or 
\[ S_{\rm WZ}[g h] = S_{\rm WZ}[g]+ S_{\rm WZ}[h] +\dfrac{1}{4\pi} \int_{\Sigma} (g \times h)^*\alpha \mod 2 \pi \Z. \]
This formula holds for some compact simple Lie groups under certain cohomological conditions \cite{GawedzkiWaldorf09}, but it may fail for other Lie groups, in the sense that for two given fields $g,h: \Sigma \to G$ then $\mathrm{PW}[g,h]$ is not \textit{a priori} in $2 \pi \mathbb Z$. Consequently, the Wess--Zumino amplitude of the product $gh$ is not 
simply related to the ones of $g$ and $h$ as in the previous Theorem for the adjoint case. For example, in the case where $\Sigma = \T^2$ and $G=U(N)$ we can appeal to Proposition~\ref{prop:PWh} to compute the Polyakov--Wiegmann functional for two fields via their normal forms in \eqref{eqn:Normalform_g1}. We end up with
\[\mathrm{PW}[g,h] = - \pi(m_g n_h -n_g m_h)   \]
which is not in $2\pi \mathbb Z$ unless the above combination of winding numbers is even. 

Such an obstruction, or \emph{anomaly}, was already studied in detail for every closed compact $\Sigma$ and every compact simple Lie group in \cite{GawedzkiWaldorf09}, and Theorem~\ref{thm:APW} above states that the adjoint Polyakov--Wiegmann formula has no anomaly for $\Sigma = \mathbb T^2$ and $G=U(N)$. More generally, a detailed classification for simple Lie groups in the context of gauged Wess--Zumino--Witten models shows that the corresponding adjoint version can also be anomalous in some cases \cite{deFromontGawedzkiTauber15}.
\end{remark}

\section{Proof of the main result} \label{sec:Proofs}

We come back to the motivating issue, namely to the proof of Theorem~\ref{thm:Main}. We adopt the streamlined notation introduced in Section~\ref{sec:Factor}; with this, Theorem~\ref{thm:Main} can be restated as

\begin{theorem} \label{thm:WZ=Berry}
Assume that $P(k)$, $k \in \R$, is a smooth and periodic family of projectors. Let $\phi \colon \T^2 \to U(N)$ be the field defined in \eqref{eq:phi}, and let $\A$ be the Berry connection associated to $P(k)$. Then
\begin{equation} \label{eqn:WZ=Berry}
\exp\left(\iu \, \WZ[\phi] \right) = \exp \left( - \iu \oint_{\T} \A \right).
\end{equation}

If moreover $P(k)$ is time-reversal symmetric, then
\begin{equation} \label{eqn:EquivariantWZ=Berry}
\sqrt{\exp\left(\iu \, \WZ[\phi] \right)} = \sqrt{\exp \left( - \iu \oint_{\T} \A \right)}.
\end{equation}
\end{theorem}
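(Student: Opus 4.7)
The plan is to exploit the factorization $\phi(t,k) = W(k)\,\psi(t)\,W(k)^{-1}$ from equation~\eqref{eqn:conjugate}, where $\psi(t) := \exp(2\pi\iu t P(0))$ depends only on $t$ and $W(k)$ is the smooth, periodic trivializing unitary of Remark~\ref{rmk:parallel} (which can additionally be chosen time-reversal symmetric when $P(k)$ is). The backbone of the argument is the adjoint Polyakov--Wiegmann identity of Theorem~\ref{thm:APW} applied with $g := W$ and $h := \psi$: this expresses $\exp(\iu\WZ[\phi])$ as the product of $\exp(\iu\WZ[\psi])$ and the explicit correction $\exp\bigl(\tfrac{\iu}{4\pi}\int_{\T^2}(W\times\psi)^*\beta\bigr)$, with an analogous half-exponent version under time-reversal symmetry.

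The first step is to show that $\exp(\iu\WZ[\psi])=1$ and that its canonical square root is $+1$. Since $\det\psi$ has trivial winding in $k$, one extends $\psi$ to the solid torus $\widetilde\Sigma = S^{1}\times D^{2}$ (filling the $k$-circle) by the $r$- and $k$-independent map $\widetilde\psi(t,r,k):=\psi(t)$. Then $\widetilde\psi^{-1}\di\widetilde\psi = 2\pi\iu P(0)\,\di t$ is proportional to $\di t$ alone, so its wedge cube vanishes identically. Under time-reversal symmetry, the involution $(t,k)\mapsto(t,-k)$ on $\partial\widetilde\Sigma$ lifts to $(t,r,k)\mapsto(t,r,-k)$ on $\widetilde\Sigma$, making the extension equivariant and fixing the canonical square root of the amplitude to $+1$.

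The second step is to compute the correction term from~\eqref{def_beta}. The piece quadratic in $W^{-1}\di W$ vanishes for degree reasons, since $W^{-1}\di W$ is a $\di k$-form and the wedge of two such is zero. Using $\psi^{-1}\di\psi+\di\psi\,\psi^{-1} = 4\pi\iu P(0)\,\di t$, a direct integration over $S^1\times\T$ gives
\[
\frac{\iu}{4\pi}\int_{\T^2}(W\times\psi)^*\beta \;=\; -\oint_{\T}\tr\bigl\{P(0)\,W^{-1}\di W\bigr\}.
\]
The final identification follows from noticing that the columns of $W(k)$ applied to an orthonormal basis of $\Ran P(0)$ form a smooth, periodic Bloch frame for $P(k)=W(k)P(0)W(k)^{*}$ (which is itself time-reversal symmetric whenever $W$ is). With respect to this frame one computes directly $\A = -\iu\tr\{P(0)\,W^{-1}\di W\}$; substituting above yields~\eqref{eqn:WZ=Berry}, and halving the exponent via the equivariant clause of Theorem~\ref{thm:APW}, combined with Definition~\ref{eqn:SQRTBerryPhase}, delivers~\eqref{eqn:EquivariantWZ=Berry}.

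The main delicate point will be the bookkeeping of wedge-product orientations and of the combinatorial factors distinguishing the amplitude from its square root, together with the verification that every extension and homotopy implicit in the application of Theorem~\ref{thm:APW} remains equivariant in the presence of time-reversal symmetry -- so that the canonical square roots on both sides of~\eqref{eqn:EquivariantWZ=Berry} acquire matching signs rather than differing by $-1$.
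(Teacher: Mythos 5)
Your proposal follows the paper's own proof essentially step for step: the factorization $\phi = W\,\psi\,W^{-1}$, the trivial extension of $\psi$ to the solid torus killing $\WZ[\psi]$ by a dimensional argument, the adjoint Polyakov--Wiegmann formula of Theorem~\ref{thm:APW} with $g=W$, $h=\psi$ to produce the correction term, and the identification of $e_a(k) := W(k)e_a(0)$ as a smooth periodic Bloch frame giving $\A = -\iu\,\Tr\{P(0)\,W^{-1}\di W\}\,\di k$ are exactly the steps of Section~\ref{sec:Proofs}, and your bookkeeping of the resulting signs is consistent.

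There is, however, one concrete error in your equivariance step. The time-reversal involution on $\Sigma = S^1\times\T$ is \emph{not} $(t,k)\mapsto(t,-k)$: since $\theta$ is antiunitary, conjugation flips the sign of $\iu$ in the exponential, so $\Theta(\phi(t,k)) = \theta\, \eu^{2\pi\iu t P(k)}\theta^{-1} = \eu^{-2\pi\iu t P(-k)} = \phi(-t,-k)$, and likewise $\Theta(\psi(t)) = \psi(-t)$. The involution must therefore flip $t$ as well, $\vartheta(t,k) = (-t,-k)$; this is also the involution $\bk\mapsto-\bk$ under which the equivariant homotopy classification of Lemma~\ref{T2->U(N)}, and hence the equivariant clause of Theorem~\ref{thm:APW}, is formulated. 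With your stated involution $(t,k)\mapsto(t,-k)$, the field $\psi$ is not even equivariant, so the extension you exhibit is not an equivariant extension in the sense of Definition~\ref{def:EquivField}. The fix is immediate: lift $\vartheta$ to the solid torus by $(t, r\,\eu^{\iu k})\mapsto(-t, r\,\eu^{-\iu k})$; the constant-in-$z$ extension $\widetilde\psi(t,z) := \psi(t)$ then satisfies $\Theta(\widetilde\psi(t,z)) = \psi(-t) = \widetilde\psi(-t,\bar z)$, is equivariant, its integrand still vanishes for dimensional reasons, and the square root of $\exp(\iu\,\WZ[\psi])$ is indeed $+1$. With this correction the rest of your argument goes through and agrees with the paper's proof.
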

\begin{proof}
We compute the two sides of \eqref{eqn:WZ=Berry} and \eqref{eqn:EquivariantWZ=Berry} independently to show that they coincide. 

We start from the Wess--Zumino amplitude of the field $\phi$. First of all, we notice that the field $\psi \colon \T^2 \to U(N)$ has a well-defined Wess--Zumino action, defined according to Definition~\ref{def:WZ}. Indeed, since $\psi \colon S^1 \times \T \to U(N)$ is actually independent of $k$, it can be extended trivially to the solid torus $\widetilde{\Sigma} := S^1 \times \D$, where $\D = \set{ z = r \, \eu^{\iu k} : r \in [0,1], \: k \in \T }$, by setting $\widetilde{\psi}(t,z) := \psi(t)$. Using this extension, we see that
\begin{equation} \label{eqn:WZphi0} 
\WZ[\psi] = \int_{\widetilde{\Sigma}} \widetilde{\psi}^* \, \chi = 0 \mod 2 \pi \Z,
\end{equation}
since the integral vanishes for dimensional reasons. Now we notice that $\phi(t,k)$ is in the adjoint form \eqref{eqn:conjugate}, so that we can appeal to the adjoint Polyakov--Wiegmann identity \eqref{eqn:APW} to compute its Wess--Zumino action as
\[
\WZ[\phi] = \WZ[W \, \psi \, W^{-1}] = \WZ[\psi] + \frac{1}{4 \pi} \int_{S^1 
\times \T} (\psi \times W)^* \beta \mod 2 \pi \Z,
\]
with $(\psi \times W)^* \beta$ as in \eqref{def_beta}. In view of \eqref{eqn:WZphi0}, the above simplifies to
\begin{align*}
\WZ[\phi] & = - \dfrac{1}{4\pi}\int_{S^1 \times \T} \Tr \left\{ \psi(t) \left( W(k)^{-1} \partial_k W(k) \, \di k \right) \psi(t)^{-1} \left( W(k)^{-1} \partial_k W(k) \di k \right) \right\} \\
& \quad - \dfrac{1}{4\pi}\int_{S^1 \times \T} \Tr \left\{ \left( W(k)^{-1} \partial_k W(k) \, \di k \right) \left( \psi(t)^{-1} \partial_t \psi(t) + \partial_t \psi(t) \, \psi(t)^{-1} \right) \di t \right\} \\
& \quad \mod 2 \pi \Z.
\end{align*}
Again by a dimensional argument, the first summand on the right-hand side of the above equality drops. Upon noticing that
\[ \partial_t \psi(t) = 2 \pi \iu \, P(0) \, \psi(t) = 2 \pi \iu \, \psi(t) \, P(0) \]
we are left with
\begin{equation} \label{eqn:WZphibis}
\WZ[\phi] = \iu \int_\T \Tr \left\{ P(0) \, W(k)^{-1} \partial_k W(k) \right\} \, \di k \mod 2 \pi \Z.
\end{equation}

Next, we compute the Berry phase on the right-hand side of \eqref{eqn:WZ=Berry}. Let $\set{e_a(0)}_{1 \le a \le m}$ be any orthonormal basis in $\Ran P(0) \simeq \C^m$. If $\set{P(k)}_{k \in \R}$ satisfies also time-reversal symmetry, we further require that $\set{e_a(0)}_{1 \le a \le m}$ is a symplectic basis for the restriction to $\Ran P(0)$ of the form defined in \eqref{eqn:SymplecticForm}. In view of \eqref{eqn:trivial}, setting
\begin{equation} \label{eqn:frame}
e_a(k) := W(k) \, e_a(0)
\end{equation}
defines an orthonormal basis $\set{e_a(k)}_{1 \le a \le m}$ of $\Ran P(k)$, which is moreover smooth and periodic (and possibly time-reversal symmetric) because so is $W(k)$. Computing the Berry connection as in \eqref{eqn:Berry} with respect to this Bloch frame yields
\[ \A = - \iu \sum_{a=1}^{m} \scal{e_a(0)}{\left( W(k)^* \partial_k W(k) \right) e_a(0)} \di k = - \iu \, \Tr \left\{ P(0) \, W(k)^{-1} \partial_k W(k) \right\} \, \di k \]
owing to the unitarity of $W(k)$. Integrating both sides of the above equality on $\T$ and comparing with \eqref{eqn:WZphibis} we obtain exactly \eqref{eqn:WZ=Berry}. 

The proof of \eqref{eqn:EquivariantWZ=Berry} goes along the same lines, using this time the equivariant adjoint Polyakov--Wiegmann formula \eqref{eqn:EAPW}. The rest of the computation stays unchanged, with the only difference that that all relevant objects are defined mod $4\pi \Z$ rather than $2 \pi \Z$, so that square roots are well-defined.
\end{proof}

\section{Conclusions and perspectives} \label{sec:Conclusions}

Theorem \ref{thm:WZ=Berry} links two $\mathbb Z_2$ invariants and brings along its proof various geometrical objects that suggest several connections with other approaches and possible generalizations, both for physics and mathematics.

First it establishes the equality in Theorem \ref{thm:FKM} and thus provides a direct connection between two geometric approaches that have been developed independently to compute the Fu--Kane--Mele $\mathbb Z_2$ invariant. One, given by $\delta$ from equation \eqref{eq:def_delta}, is based on the Bloch frames associated to the family of projectors $P(\bf{k})$, and the corresponding Berry connection \cite{FiorenzaMonacoPanati16,CorneanMonacoTeufel16}. The other one, given by $\mathcal K$ from equation \eqref{eq:def_K}, is based on the square root of Wess--Zumino amplitudes computed for unitary families associated to $P(\bf{k})$ \cite{Gawedzki15,Lyon15}. 

Even if the two invariants were already matched through the original Pfaffian formula for $\FKM$ \cite{FuKane06} and the gerbe formalism which allows to show that $\mathcal K$ agrees with \eqref{eq:WZ_sr}, we proved that the two invariants are actually equal without referring to these aspects. Instead of localizing the formulas on the four time-reversal symmetric points, we computed explicit expressions for the Wess--Zumino amplitudes (and their square roots) of maps localized on the loops at the boundary of the effective Brillouin zone, the crucial point being that on such loops the family $P(a,k)$ and the corresponding field $\phi_a$ can be factorized in an adjoint form (compare \eqref{eqn:trivial} and \eqref{eqn:conjugate}) so that the (square root of the) Wess--Zumino amplitude of $\phi_a$ can be computed through the (equivariant) adjoint Polyakov--Wiegmann formula.

Note that the effective Brillouin zone was actually introduced by Moore and Balents in \cite{MooreBalents07}, where they proposed to contract the Hamiltonian map $\bk \mapsto H(\bk)$ living on the cylinder $\EBZ$ to one living on a sphere, where a corresponding Chern number can be defined. By imposing time-reversal invariance along the contraction, they showed that this Chern number was defined modulo 2. In some sense the equivariant field extension from Definition \ref{def:EquivField} gives an explicit realization of such a contraction, but for the unitary operator $\exp(2 \pi \iu t P(\bf k) )$ on $[0,1] \times \mathrm{EBZ}$ rather than the Hamiltonian on EBZ. 

Besides, the computation of the Fu--Kane--Mele invariant in terms of loops was also already investigated for a $2$-band  many-body system \cite{LeeRyu08}, where it was shown that $\FKM$ can be expressed in terms of $SU(2)$-Wilson loops, namely the trace of the path-ordered exponential of the integral of the non-abelian Berry connection:
\[ (-1)^{\FKM} = W[\T_\pi] W[\T_0], \quad \text{with} \quad W[\T] = \tr \left\{ \mathcal{P} \, \exp \left(-\iu \oint_{\T} A \right) \right\}, \]
where $A_{ab} = -\iu \scal{e_a}{\di e_b}$, so that $\A = \tr\{A\}$. The path-ordered exponential is a descriptive notation for the holonomy of the Berry connection along the loop $\T$ \cite{KarpMansouriRno00}, which is nothing but the solution of the Cauchy problem for the parallel transport operator, introduced in Remark~\ref{rmk:parallel}, evaluated at the endpoint of the loop $k = 2 \pi$ \cite[Sec.~9.12]{GockelerSchucker}:
\[ \mathcal{P} \, \exp \left(-\iu \oint_{\T} A \right) = P(0) \, T(2 \pi) \, P(0), \]
where the right-hand side should be interpreted as an $m \times m$ matrix acting on $\Ran P(0) \simeq \C^m$. As was already mentioned in Remark~\ref{BlochBundle}, the Berry phase is the determinant of this holonomy, namely
\[ \det \left( P(0) \, T(2 \pi) \, P(0) \right) = \exp \left( - \iu \oint_{\T} \A \right) = \exp \left( -\iu \oint_\T \tr\{A\} \right). \] 
The difference between the Berry phase and the Wilson loop is thus that ``the trace is taken before the exponential'' in the former. While the Wilson loop approach in \cite{LeeRyu08} appears to be restricted to the minimal case $m=2$ for the rank of the projector, our Theorem \ref{thm:FKM} holds for any $m$. However the comparison between the two approaches might be an interesting direction of investigation.

Finally, the proof of Theorem \ref{thm:WZ=Berry} involves the adjoint Polyakov--Wiegmann formula from Theorem  \ref{thm:APW} concerning Wess--Zumino amplitudes for products of fields, which mostly relies on the homotopy classes of the considered maps, characterized by Lemmas \ref{T->U(N)} and \ref{T2->U(N)}. As it was pointed out in Remark \ref{rk:PWanomaly}, the Polyakov--Wiegmann formula and its adjoint version can be anomalous, so that that the Wess--Zumino amplitude of a product map $gh$ is not easily related to the ones of $g$ and $h$. This part of our work also constitutes a first step towards a classification of anomalies for $U(N)$-valued fields, that generalizes the one for simple Lie groups obtained using gerbe techniques \cite{GawedzkiWaldorf09,deFromontGawedzkiTauber15}, for what concerns the Polyakov--Wiegmann formula, its adjoint version, and beyond.

\appendix

\section{Homotopy classes of (equivariant) fields}

In this Appendix, we collect a number of properties concerning homotopy classes of maps $g \colon \T^2 \to U(N)$. As a preliminary step of independent interest, we will need to compute the (equivariant) homotopy classes of maps $\T \to U(N)$.

Recall that, if $X$ is a topological space endowed with an involution $\vartheta$ (\eg $X= \T$ or $X=\T^2$ with involution $\vartheta(\bk) = - \bk$), then two equivariant maps $f_0, f_1 \colon X \to U(2M)$ are called \emph{equivariantly homotopic} if there exists an homotopy $F \colon X \times [0,1] \to U(2M)$, $F(x,s) \equiv f_s(x)$, such that 
\begin{equation} \label{EquivHomotopy}
\Theta \circ f_s = f_s \circ \vartheta \quad \text{for all} \quad s \in [0,1],
\end{equation}
where $\Theta$ is defined in \eqref{Theta}. The homotopy class of the map $f \colon X \to U(N)$ is denoted by $[f]$, and the set of such homotopy classes will be denoted by $[X,U(N)]$. Analogously, $[f]_{\Z_2}$ denotes the equivariant homotopy class of an equivariant map $f \colon X \to U(2M)$, while the set of equivariant homotopy classes will be denoted by $[X,U(2M)]_{\Z_2}$.

\begin{lemma} \label{T->U(N)}
For a smooth map $f \colon \T \to U(N)$ set
\begin{equation} \label{def_deg}
\deg(\det f) := \frac{1}{2 \pi \iu} \oint_{\T} \Tr \left\{ f^{-1} \di f \right\}.
\end{equation}
\begin{enumerate}
 \item \label{1d-NonEquiv} The map $[f] \mapsto \deg(\det f)$ establishes a bijection%
 \footnote{\label{foot:iso} Actually, this is an isomorphism of groups, but we will not need this fact.}%
 \[ [\T, U(N)] \stackrel{1:1}{\longleftrightarrow} \Z. \]
 \item \label{1d-Equiv} The map $[f]_{\Z_2} \mapsto \deg(\det f)$ establishes a bijection\footnoteref{foot:iso}
 \[ [\T, U(2M)]_{\Z_2} \stackrel{1:1}{\longleftrightarrow} 2\Z. \]
\end{enumerate}
\end{lemma}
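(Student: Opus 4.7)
For Part~\ref{1d-NonEquiv}, well-definedness of $[f] \mapsto \deg(\det f)$ is immediate, since \eqref{def_deg} computes the winding number of the composite $\det f \colon \T \to U(1)$, which is a homotopy invariant. Surjectivity is witnessed by the family $f_n(k) := \diag(\eu^{\iu n k}, 1, \ldots, 1)$ with $\deg(\det f_n) = n$. Injectivity follows from the classical fact that $\pi_1(U(N)) \cong \Z$ with the isomorphism induced by $\det_*$ (the inclusion $U(1) \hookrightarrow U(N)$ via $z \mapsto \diag(z, 1, \ldots, 1)$ realizes a generator, and $SU(N)$ is simply connected); since $\pi_1(U(N))$ is abelian, the free homotopy classes $[\T, U(N)]$ coincide with based ones.

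For Part~\ref{1d-Equiv}, the key structural observation is that at each fixed point $k \in \set{0, \pi}$ of the involution $\vartheta(k) = -k$, the equivariance condition forces $f(k)$ to lie in the fixed subgroup $U(2M)^\Theta$, which for $\theta = JK$ (with $J$ as in \eqref{J}) coincides with the compact symplectic group $Sp(M) = U(2M) \cap Sp(2M, \C)$. An element $g \in Sp(M)$ satisfies $g^T J g = J$, so $\det(g)^2 = 1$, and since $Sp(M)$ is connected one has $\det g \equiv 1$; in particular $\det f(0) = \det f(\pi) = 1$. Evenness of $\deg(\det f)$ then follows by picking a continuous lift $\tilde\phi$ of $\det f$ on $[-\pi, \pi]$ and exploiting the equivariance relation $\tilde\phi(-k) = -\tilde\phi(k) + 2\pi q$ (for a locally constant, hence constant, integer $q$) together with $\tilde\phi(0), \tilde\phi(\pi) \in 2\pi\Z$, which yields $\deg(\det f) \in 2\Z$. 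Surjectivity onto $2\Z$ is realized by the equivariant model $f_n(k) := \diag(\eu^{\iu n k}, \eu^{\iu n k}, 1, \ldots, 1)$: conjugation by $J$ pair-swaps the first two diagonal entries, but they are equal, so $f_n$ is equivariant and $\deg(\det f_n) = 2n$.

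For the equivariant injectivity, my plan is to exploit the connectedness of $Sp(M)$ to equivariantly deform any $f$ so that $f(0) = f(\pi) = \Id$, by implementing smooth paths in $Sp(M)$ from $f(0)$ and $f(\pi)$ to $\Id$ inside disjoint equivariant neighborhoods of the two fixed points. After such normalization, $f \big|_{[0, \pi]}$ is a loop in $U(2M)$ based at $\Id$ whose determinant winds $n = \deg(\det f)/2$ times (the factor of two arising from the fact that the substitution $k \mapsto -k$ combined with equivariance makes the contributions of $[0, \pi]$ and $[-\pi, 0]$ to \eqref{def_deg} coincide). Part~\ref{1d-NonEquiv} then supplies a based homotopy between two such normalized half-loops of equal winding; extending this homotopy to $[-\pi, 0]$ by the rule $f_s(-k) := \Theta(f_s(k))$ produces the desired equivariant homotopy on $\T$. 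The main obstacle lies in this last step: one must verify that the half-homotopy can be chosen to maintain the endpoint values $\Id$ at $k = 0, \pi$ throughout, so that the two halves glue continuously into a smooth equivariant homotopy on the whole circle, and that the initial equivariant normalization procedure itself is homotopic to the identity through equivariant maps.
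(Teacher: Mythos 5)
Your proposal follows essentially the same route as the paper's proof: part~\ref{1d-NonEquiv} via $\pi_1(U(N)) \simeq \pi_1(U(1)) \simeq \Z$ using simple connectedness of $SU(N)$, and part~\ref{1d-Equiv} via the identification of the $\Theta$-fixed subgroup with $Sp(M) = U(2M) \cap Sp(2M,\C)$ (where determinants equal $1$), evenness of the degree, the same surjectivity witness $f_n(k) = \diag(\eu^{\iu n k}, \eu^{\iu n k}, 1, \ldots, 1)$, and injectivity by reducing to $[0,\pi]$, invoking part~\ref{1d-NonEquiv}, and reflecting the resulting homotopy through the involution. The only real difference is cosmetic: instead of normalizing $f(0) = f(\pi) = \Id$, the paper closes $f\big|_{[0,\pi]}$ into a genuine loop by concatenating with an auxiliary path in $Sp(M)$ joining $f(0)$ to $f(\pi)$ (using that $Sp(M)$ is connected and simply connected), and then homotopes the closed-up loops.

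Both obstacles you flag do close, with facts already at your disposal, so your plan is sound. (i) Endpoint preservation: two paths $[0,\pi] \to U(2M)$ sending both endpoints to $\Id$ and having equal determinant winding represent the same class in $\pi_1(U(2M), \Id) \simeq \Z$, hence are homotopic rel $\set{0,\pi}$; this is precisely the based form of part~\ref{1d-NonEquiv}, and the passage between free and based classes is legitimate because $\pi_1$ is abelian, as you yourself noted. (ii) Equivariant normalization: pick a path $\gamma \colon [0,1] \to Sp(M)$ with $\gamma(0) = \Id$, $\gamma(1) = f(0)^{-1}$, and an even bump function $\rho$ (i.e.\ $\rho(-k) = \rho(k)$, $\rho(0)=1$, supported near $k=0$), and set $f_s(k) := f(k)\, \gamma(s\rho(k))$; since $\Theta$ is multiplicative, $\gamma$ takes values in the $\Theta$-fixed subgroup, and $\rho$ is even, one gets $\Theta(f_s(k)) = f(-k)\,\gamma(s\rho(-k)) = f_s(-k)$, so the deformation is equivariant and achieves $f_1(0) = \Id$; repeat near $k = \pi$. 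It is worth pointing out that the paper's own write-up elides exactly the issue you isolate in (i): its free homotopy $\widetilde{f}_s$ between the closed-up loops must in addition keep the values at the two fixed points inside $Sp(M)$ for the reflection formula $f_s(k) := \theta^{-1} \widetilde{f}_s(-k)\, \theta$ to be continuous there. So your explicit attention to this point is a feature of your write-up, not a defect of the approach.
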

\begin{proof}
Part \ref{1d-NonEquiv} of the statement follows from quite standard arguments: we include here a sketch of the proof for the readers' convenience. Notice first of all that the set $[\T,U(N)]$ is nothing but the first homotopy group $\pi_1(U(N))$ of the unitary group. The short exact sequence of groups
\[ 1 \longrightarrow SU(N) \longrightarrow U(N) \xrightarrow{\det} U(1) \longrightarrow 1 \]
induces an isomorphism $\pi_1(U(N)) \simeq \pi_1(U(1))$ since the special unitary group is simply connected \cite[Ch.~8, Sec.~12]{Husemoller94}. On the other hand, the homotopy group $\pi_1(U(1)) = [\T,U(1)]$ is isomorphic to the group of integers $\Z$, the isomorphism being given by the winding number of a map $\varphi \colon \T \to U(1)$; the latter can be computed by the Cauchy integral \cite[\S 13.4(b)]{Dubrovin}
\[ \deg(\varphi) = \frac{1}{2 \pi \iu} \oint_{\T} \varphi^{-1} \di \varphi. \]
When $\varphi = \det f$ with $f \colon \T \to U(N)$, the above formula reduces to \eqref{def_deg} (see \eg \cite[Lemma~2.12]{CorneanMonacoTeufel16}).

We now come to part \ref{1d-Equiv} of the statement. We begin by noticing that the fixed-point set for the adjoint action $\Theta$ of $\Z_2$ on $U(2M)$ defined by \eqref{Theta} is given by the group $Sp(M) = U(2M) \cap Sp(2M, \C)$, the unitary group over the quaternions \cite[Prop.~1.139]{Knapp05}. Indeed, the condition $\Theta(g) = g$ for $g \in U(2M)$ can be rewritten as $g^{\mathsf{T}} \, J \, g = J$, where $J$ is the symplectic matrix \eqref{J} and $^{\mathsf{T}}$ denotes transposition: the latter is exactly the condition for a matrix to be 
symplectic. Notice that matrices in $Sp(M)$ have determinant equal to $1$.

Next we show that the map in part \ref{1d-Equiv} is well-defined and provides a bijection. The set of $\Z_2$-equivariant homotopy classes $[\T, U(2M)]_{\Z_2}$ lies as a subset of the set of ``unconstrained'' (\ie non-equivariant) homotopy classes $[\T, U(2M)] = \pi_1(U(2M)) \simeq \Z$, as we have just shown. Now, if $f \colon \T \to U(2M)$ is $\Z_2$-equivariant, \ie $f(-k) = \Theta(f(k))$, then $f(0)$ and $f(\pi)$ are fixed points with respect to $\Theta$, and hence lie in $Sp(M)$ by the considerations above. In particular, $\det f(0) = \det f(\pi) = 1$, so that the map $\det f$ is already periodic on $\T_+ := [0, \pi]$. Moreover, the values it assumes on $\T_+$ completely determine the map $\det f \colon \T \to U(1)$ as $\det f(-k) = \overline{\det f(k)}$ in view of the equivariance condition. It follows that for an equivariant map $f \colon \T \to U(2M)$
\[ \deg(\det f) = 2 \left( \frac{1}{2 \pi \iu} \oint_{\T_+} \Tr \left\{ f^{-1} \di f \right\} \right) \quad \in 2 \Z. \]
Consequently, the map in part \ref{1d-Equiv} of the statement is well-defined: we need to show that it is injective and surjective. 

For an equivariant map $f \colon \T \to U(2M)$, denote by
\[ \deg_+(\det f) := \frac{1}{2 \pi \iu} \oint_{\T_+} \Tr \left\{ f^{-1} \di f \right\}  \quad \in \Z. \]
Assume first of all that $\deg_+(\det f_0) = \deg_+(\det f_1)$. The map $f_0 \big|_{\T_+} \colon \T_+ \to U(2M)$ need not be periodic, even though its determinant is: however, as was already remarked, the matrices $f_0(0)$ and $f_0(\pi)$ lie in the fixed-point set $Sp(M)$ for the action of $\Theta$ on $U(2M)$. The group $Sp(M)$ is path-connected and simply connected \cite[Prop.~1.136]{Knapp05}, so there exists a contractible loop $\ell_0 \colon [0,1] \to Sp(M)$ such that $\ell_0(0) = f_0(0)$ and $\ell_0(1) = f_0(\pi)$. Notice that by definition of $Sp(M)$ the loop $\ell_0$ is also equivariantly contractible. Consider the concatenation of maps $\widetilde{f}_0 := \ell_0 \sharp \left(f_0 \big|_{\T_+}\right)$: this is now a $U(2M)$-valued periodic map (and we write $\widetilde{f}_0 \colon \widetilde{\T} \to U(2M)$, where $\widetilde{\T}$ is the concatenation of the two intervals $\T_+$ and $[0,1]$ with endpoints identified), so it uniquely determines a class in $\pi_1(U(2M)) \simeq \Z$ via part \ref{1d-NonEquiv}. Since $\det \ell_0 \equiv 1$, this integer coincides exactly with $\deg_+(\det f_0)$. We argue similarly for $f_1$ and end up with $\widetilde{f}_1 \colon \widetilde{\T} \to U(2M)$, completely specified up to homotopy by $\deg_+(\det f_1)$. By assumption the two integers coincide, and hence in view of the isomorphism of part \ref{1d-NonEquiv} the maps $\widetilde{f}_0$ and $\widetilde{f}_1$ are homotopic. Let $\widetilde{f}_s \colon \widetilde{\T} \to U(2M)$ be an homotopy between them, and define
\[ f_s(k) := \begin{cases} 
\widetilde{f}_s(k) & \text{if } k \in \T_+, \\
\theta^{-1} \, \widetilde{f}_s(-k) \, \theta & \text{if } k \in \T \setminus \T_+.
\end{cases} \]
Since the loops $\ell_0$ and $\ell_1$ are equivariantly contractible, the above defines an equivariant homotopy between $f_0$ and $f_1$, and the map $[f]_{\Z_2} \mapsto \deg(\det f)$ is injective.

Finally, to check surjectivity it suffices to notice that 
\[ f_n(k) := \mathrm{diag}(\eu^{\iu n k}, \eu^{\iu n k}, 1, \ldots, 1),  \quad n \in \Z \] 
defines an equivariant map and has $\deg_+(\det f_n) = n$.
\end{proof}

The above result is the main building block in studying the homotopy classes of fields $\T^2 \to U(N)$.

\begin{lemma} \label{T2->U(N)}
For a map $g \colon \T^2 \to U(N)$, denote by $g\sub{L} \colon \T\sub{L} \to U(N)$ (resp. $g\sub{R} \colon \T\sub{R} \to U(N)$) the restriction of $g$ to $\T\sub{L} = \T \times \set{0} \subset \T \times \T = \T^2$ (resp. to $\T\sub{R} = \set{0} \times \T \subset \T \times \T = \T^2$). 
\begin{enumerate}
 \item \label{2d-NonEquiv} The map $[g] \mapsto \left( \deg(\det g\sub{L}), \deg(\det g\sub{R}) \right)$ establishes a bijection
 \[ [\T^2, U(N)] \stackrel{1:1}{\longleftrightarrow} \Z^2. \]
 \item \label{2d-YesEquiv} The map $[g]_{\Z_2} \mapsto \left( \deg(\det g\sub{L}), \deg(\det g\sub{R}) \right)$ establishes a bijection
 \[ [\T^2, U(2M)]_{\Z_2} \stackrel{1:1}{\longleftrightarrow} (2\Z)^2. \]
\end{enumerate}
\end{lemma}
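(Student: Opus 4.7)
Both parts follow the same three-step scheme: well-definedness of the winding-number assignment, surjectivity via explicit normal forms, and injectivity by ``dividing out'' these normal forms and showing that the remainder is (equivariantly) null-homotopic. Lemma~\ref{T->U(N)} supplies both the input for well-definedness and the model for the equivariant adaptations.

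For part~(\ref{2d-NonEquiv}), well-definedness is immediate from Lemma~\ref{T->U(N)}(\ref{1d-NonEquiv}) applied to the restrictions $g\sub{L}, g\sub{R}$. Surjectivity is witnessed by the normal form
\[ g_{n,m}(k_1, k_2) := \diag\bigl( \eu^{\iu(n k_1 + m k_2)}, 1, \ldots, 1 \bigr), \]
whose restrictions have degrees $n$ and $m$. For injectivity, given $g$ with winding numbers $(n,m)$, consider $h := g \cdot g_{n,m}^{-1}$: then $\det h \colon \T^2 \to U(1)$ has vanishing winding along both generating loops and is thus null-homotopic; pulling this null-homotopy back through the section $\lambda \mapsto \diag(\lambda, 1, \ldots, 1)$ of $\det \colon U(N) \to U(1)$ reduces to the case $h \colon \T^2 \to SU(N)$. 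A cellular extension on the minimal CW decomposition of $\T^2$ (one $0$-cell, two $1$-cells, one $2$-cell) using $\pi_1(SU(N)) = 0$ and $\pi_2(SU(N)) = 0$ then null-homotopes $h$.

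For part~(\ref{2d-YesEquiv}), the same scheme is run equivariantly. Well-definedness follows from Lemma~\ref{T->U(N)}(\ref{1d-Equiv}) applied to the restrictions $g\sub{L}, g\sub{R}$, which are themselves equivariant maps $\T \to U(2M)$. Surjectivity is witnessed by the doubled normal form
\[ g^{\mathrm{eq}}_{n,m}(k_1, k_2) := \diag\bigl( \eu^{\iu(n k_1 + m k_2)}, \eu^{\iu(n k_1 + m k_2)}, 1, \ldots, 1 \bigr), \]
which is equivariant because each diagonal $2 \times 2$ block $\diag(a, a)$ is preserved by conjugation with the symplectic matrix $J$, whose first $2 \times 2$ block commutes with such diagonals.

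The main obstacle is equivariant injectivity. Given equivariant $g$ with winding numbers $(2n, 2m)$, set $h := g \cdot (g^{\mathrm{eq}}_{n,m})^{-1}$, which remains equivariant because $\Theta$ is a group homomorphism on $U(2M)$. I would proceed in two stages, each preserving equivariance. Stage one trivializes $\det h$: the vanishing winding provides a smooth lift $\phi \colon \T^2 \to \R$ with $\det h = \eu^{\iu \phi}$; the equivariance relation $\det h(-\bk) = \overline{\det h(\bk)}$ (valid because $\theta$ is antiunitary) allows, after normalizing $\phi(\mathbf 0) = 0$ using that $h(\mathbf 0) \in Sp(M)$ has unit determinant, to choose $\phi$ odd. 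Then $s_t(\bk) := \eu^{-\iu t \phi(\bk)/(2M)} \Id$ is equivariant (since $\Theta(\lambda \Id) = \bar\lambda \Id$), so $s_1 h$ is equivariant with $\det \equiv 1$. Stage two null-homotopes the resulting equivariant map $h \colon \T^2 \to SU(2M)$ cellularly on a $\Z_2$-equivariant CW structure whose fixed $0$-skeleton is the four TRIMs and whose higher cells come in free $\Z_2$-pairs: at the TRIMs, where $h$ takes values in the fixed locus $Sp(M)$, one uses the path-connectedness of $Sp(M)$ (invoked already in the proof of Lemma~\ref{T->U(N)}(\ref{1d-Equiv})) to null-homotope $h$ within $Sp(M)$; on each paired $1$- or $2$-cell one extends the null-homotopy on a chosen representative using $\pi_1(SU(2M)) = 0$ and $\pi_2(SU(2M)) = 0$, and mirrors it to the partner cell through the $\Z_2$-action. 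The evenness of the degrees (the factor of $2$ in $(2\Z)^2$) is exactly what is required for the equivariant null-homotopy on the $\Z_2$-invariant loops in $\T^2$ to exist at all, as already enforced by Lemma~\ref{T->U(N)}(\ref{1d-Equiv}).
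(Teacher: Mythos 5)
Your proof is correct, and it reaches the result by a more hands-on route than the paper. The paper's own proof is very short: Lemma~\ref{T->U(N)} pins down the restrictions $g\sub{L}$, $g\sub{R}$ up to (equivariant) homotopy, and the extension of an (equivariant) homotopy from the $1$-skeleton $\T\sub{L} \cup \T\sub{R}$ over the $2$-cell is then delegated to obstruction theory, cited as a black box from Davis--Kirk (non-equivariant case) and Bredon (equivariant case), with the vanishing of $\pi_2(U(N))$ killing the only obstruction class. You replace the citations by explicit constructions: surjectivity via the diagonal normal forms (the same ones the paper uses later in the proof of Theorem~\ref{thm:APW}), and injectivity by dividing out the normal form, equivariantly trivializing the determinant through an odd lift $\phi$, and null-homotoping the resulting $SU(2M)$-valued equivariant map cell by cell on a $\Z_2$-CW structure, with the fixed $0$-cells handled by path-connectedness of $Sp(M)$ (the same ingredient the paper uses in part~\ref{1d-Equiv} of Lemma~\ref{T->U(N)}) and the free cells handled by $\pi_1(SU(2M)) = \pi_2(SU(2M)) = 0$ together with the mirror trick. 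What the paper's route buys is brevity and generality; what yours buys is self-containedness exactly where the paper is least explicit, namely the equivariant statement, where Bredon's equivariant obstruction theory is a genuine black box --- this is very much in the spirit of the paper's stated aim of circumventing heavy machinery. Two details you should spell out in a full write-up: (i) the $\Z_2$-CW structure you postulate does exist, for instance the grid with vertices at the four time-reversal invariant momenta, the eight arcs of the four invariant circles $\set{k_1 \in \set{0,\pi}} \cup \set{k_2 \in \set{0,\pi}}$ forming four free pairs of $1$-cells, and the four open squares forming two free pairs of $2$-cells; (ii) the consistency of the mirrored homotopy with the already-constructed one along shared boundaries holds precisely because at the fixed $0$-cells the homotopy takes values in the $\Theta$-fixed locus $Sp(M)$, so it agrees with its own mirror image there.
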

\begin{proof}
In view of Lemma~\ref{T->U(N)}, the restrictions $g\sub{L}$ and $g\sub{R}$ are completely specified up to homotopy by the winding numbers of their determinants, defined as in \eqref{def_deg}. It is a fundamental result in (equivariant) obstruction theory (see \cite[Ch.~7]{DavisKirk01} and \cite[Ch.~2]{Bredon}) that the obstruction to lift an (equivariant) homotopy from the $1$-skeleton $\T\sub{L} \cup \T\sub{R}$ of the $2$-torus to the whole $\T^2$ is encoded in a cohomology class with coefficients in $\pi_2(U(N))$. Since the latter second homotopy group is trivial \cite[Ch.~8, Sec.~12]{Husemoller94}, all homotopies on the $1$-skeleton extend to the $2$-skeleton, and this concludes the proof.
\end{proof}

\begin{remark}
The computation of spaces of (equivariant) homotopy maps from $\T^d$ to $U(N)$ are not new in the literature, see for example \cite{KennedyGuggenheim15} and references therein (where applications to other classes of topological insulators are also discussed). We decided however to include Lemmas~\ref{T->U(N)} and \ref{T2->U(N)} in our presentation because abstract proofs from algebraic topology usually fail to describe explicitly the isomorphisms involved. The characterization of the bijections in Lemmas~\ref{T->U(N)} and \ref{T2->U(N)} in terms of winding numbers has been repeatedly used throughout the paper.
\end{remark}


\bigskip \bigskip

\footnotesize

\begin{tabular}{ll}

(D. Monaco) & \textsc{Fachbereich Mathematik, Eberhard Karls Universit\"{a}t T\"{u}bingen} \\
 &  Auf der Morgenstelle 10, 72076 T\"{u}bingen, Germany \\
 &  {E-mail address}: \href{mailto:domenico.monaco@uni-tuebingen.de}{\texttt{domenico.monaco@uni-tuebingen.de}} \\
\\
(C. Tauber) & \textsc{Dipartimento di Matematica, ``La Sapienza'' Universit\`{a} di Roma} \\
 &  Piazzale Aldo Moro 2, 00185 Rome, Italy \\
 & \textsl{and} \\
 & \textsc{Institute for Theoretical Physics, ETH Z\"{u}rich} \\
 &  Wolfgang-Pauli-Str. 27, 8093 Z\"{u}rich, Switzerland \\
 &  {E-mail addresses}: \href{mailto:tauber@mat.uniroma1.it}{\texttt{tauber@mat.uniroma1.it}}, \href{mailto:tauberc@itp.phys.ethz.ch}{\texttt{tauberc@itp.phys.ethz.ch}}  \\
\end{tabular}
\end{document}